\newcommand{\N}{\mathbb{N}}
\newcommand{\Reals}{\mathbb{R}}
\newcommand{\DynamicalSystems}{\mathbf{D}}
\newcommand{\Profiles}{\mathbf{P}}
\newcommand{\set}[2][]{#1\{#2#1\}}
\newcommand{\profile}[1]{\mathbf{#1}}
\DeclareMathOperator{\prof}{prof}
\title{Profiles of dynamical systems and their algebra}
\author{Caroline Gaze-Maillot\inst{1} \and Antonio E. Porreca\inst{2}}
\institute{Aix Marseille Université, Université de Toulon, CNRS, LIS, Marseille, France\\\email{caroline.gaze@gmail.com} \and Université Publique\\\email{antonio.porreca@lis-lab.fr}}
\begin{document}

\maketitle

\begin{abstract}
The commutative semiring~$\DynamicalSystems$ of finite, discrete-time dynamical systems was introduced in order to study their (de)composition from an algebraic point of view. However, many decision problems related to solving polynomial equations over~$\DynamicalSystems$ are intractable (or conjectured to be so), and sometimes even undecidable. In order to take a more abstract look at those problems, we introduce the notion of ``topographic'' profile of a dynamical system~$(A,f)$ with state transition function~$f \colon A \to A$ as the sequence~$\prof A = (|A|_i)_{i \in \N}$, where~$|A|_i$ is the number of states having distance~$i$, in terms of number of applications of~$f$, from a limit cycle of~$(A,f)$. We prove that the set of profiles is also a commutative semiring~$(\Profiles,+,\times)$ with respect to operations compatible with those of~$\DynamicalSystems$ (namely, disjoint union and tensor product), and investigate its algebraic properties, such as its irreducible elements and factorisations, as well as the computability and complexity of solving polynomial equations over~$\Profiles$.
\end{abstract}

\section{Introduction}
\label{sec:introduction}

Given a description of a dynamical system, it is often interesting for scientific or engineering purposes to analyse its dynamics, in order to detect its asymptotic behaviour, such as the number and size of limit cycles or fixed points, or other interesting behaviours, such as the reachability of states or its transient paths. However, these problems are often computationally demanding when the system is described in a succinct way, as one normally does, e.g., for Boolean automata networks or cellular automata~\cite{Goles1990a,Sutner1995a}. It is useful, then, to decompose the system into smaller systems before applying such algorithms; if an appropriate decomposition is chosen, the global behaviour of the system may be deduced from the behaviour of its components~\cite{Perrot2020a}.

Let us now consider finite, discrete-time dynamical systems in the most general sense: as finite sets~$A$ of states (including the empty set) together with a state transition function~$f \colon A \to A$. The (countably infinite) set of finite dynamical systems up to isomorphism is a semiring~$(\DynamicalSystems,+,\times)$ with the operations~\cite{Dennunzio2018b}
\begin{align*}
  &(A,f) + (B,g) = (A \uplus B, f + g)&
  &\text{where } (f+g)(x) = \begin{cases}
                              f(x) & \text{if } x \in A \\
                              g(x) & \text{if } x \in B \\
                            \end{cases} \\
  &(A,f) \times (B,g) = (A \times B, f \times g)&
  &\text{where } (f \times g)(a,b) = (f(a), g(b)).
\end{align*}
These operations can also be defined in terms of the graphs of the dynamics as disjoint union~$+$ and graph tensor product~$\times$, which equivalently corresponds to the Kronecker product of the adjacency matrices~\cite{Hammack2011a}.

Given this algebraic structure, one can try to decompose dynamical systems in terms of factoring, or in terms of polynomial equations over~$\DynamicalSystems[\vec{X}]$ in several variables. The decomposition of a dynamical systems in terms of the operations~$+$ and~$\times$ does indeed allow us to detect several interesting dynamical behaviours of the system in terms of its components. For instance, the limit cycles in a sum are just the union of the limit cycles of the addends, while in a product one can predict the number and length of limit cycles as a function of the GCD and LCM of the lengths of the cycles of the factors~\cite{Dorigatti2018a}.

However, solving equations over~$\DynamicalSystems[\vec{X}]$ is not easy either, even if the dynamical systems are given in input explicitly, either as a transition table, or equivalently in terms of the graph of its dynamics~$G(A,f) = (A, \set{(a,f(a)) : a \in A})$. General polynomial equations are even
undecidable~\cite{Dennunzio2018b}, systems of linear equations are~$\NP$-complete, and single equations (even linear) with a constant side are also suspected to be~$\NP$-complete~\cite{Porreca2020a}.

When (de)composing dynamical systems as products, one frequently works starting from the limit cycles and backwards towards the gardens of Eden (states without preimages). It is then useful to know how many states there are at distance~$0, 1, \ldots$ from the limit cycles, as that gives us, for instance, necessary conditions for the compositeness of a system. In this paper we formalise this as the notion of \emph{profile} of a dynamical system, in order to analyse systems from a more abstract point of view. We obtain another semiring~$(\Profiles,+,\times)$ with the ``natural'' operations derived from those of~$(\DynamicalSystems,+,\times)$, analyse some of its algebraic properties (notably, the majority of profiles are irreducible) and prove that working with equivalence classes of systems (with respect to profile equality) ultimately does \emph{not} reduce the complexity of equation problems: general polynomial equations remain undecidable, and even solving a single linear equation is~$\NP$-complete.

\section{Profiles of dynamical systems}
\label{sec:profiles}

Any finite dynamical system~$(A,f)$ consists of one or more disjoint \emph{limit cycles}, which constitute the asymptotic behaviour of the system. Each cycle of length~$1$ is called a \emph{fixed point}, and its only state~$x$ satisfies~$f(x) = x$. The transient (non-asymptotic) behaviour of the system consists of zero or more directed trees of least two nodes having a state of a limit cycle as its root. The existence of limit cycles, which does \emph{not} hold in general for infinite dynamical systems, gives a (pre)ordering to the states, with respect to their distance (in terms of number of applications of~$f$) from the limit cycle in the same connected component of the graph of the dynamics, which we will call its \emph{height}.

\begin{definition}
Let~$(A,f)$ be a dynamical system and let~$x \in A$. We say that the \emph{height of~$x$}, in symbols~$h_A(x)$ or even~$h(x)$ if~$A$ is implied, is the minimum~$h$ such that~$f^h(x)$ is a periodic state, that is, the length of a path from~$x$ to the nearest periodic state in the graph of the dynamics of~$A$.
\end{definition}

It is also possible to generalise the notion of height to a dynamical systems.

\begin{definition}
Let~$(A,f)$ be a dynamical system. We say that the \emph{height of~$A$}, in symbols~$h(A)$, is the maximum height of its states: $h(A) = \max \set{h_A(x) : x \in A}$.
\end{definition}

We can now introduce the notion of \emph{profile} of a dynamical system, whose name is inspired by the topographic profile of a terrain (Fig.~\ref{fig:profile}).

\begin{figure}[t]
\centering
\includegraphics[width=\textwidth]{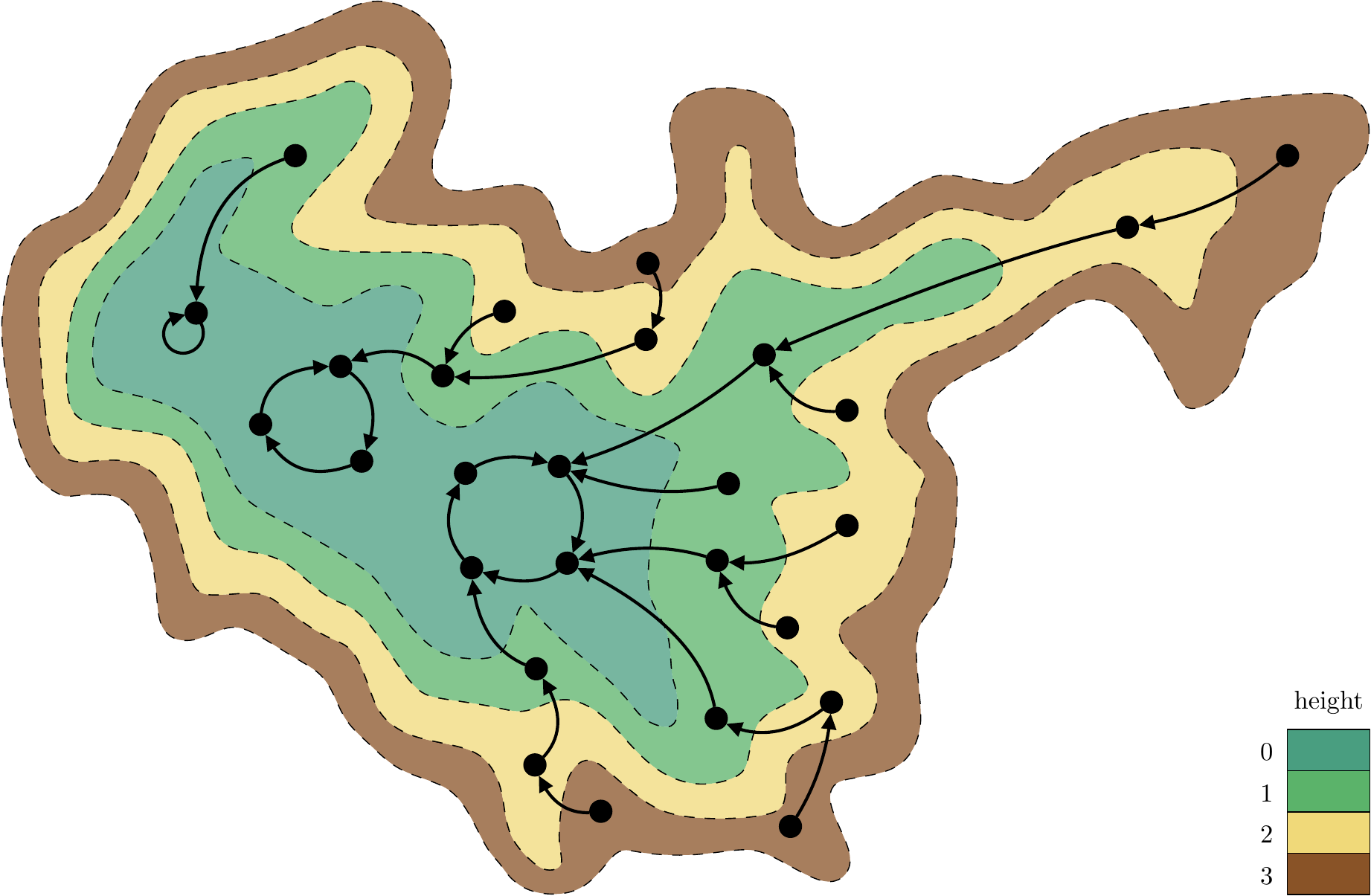}
\caption{Visual representation of the contour lines (isolines) of a dynamical system having profile~$(8,7,8,4)$: there are~$8$ states in limit cycles, $7$~states of height~$1$, $8$~of height~$2$, and~$4$ of height~$3$.}
\label{fig:profile}
\end{figure}

\begin{definition}
Let~$(A,f)$ be a dynamical system of height~$h$, and let~$|A|_i$ be the number of states of~$A$ having height~$i$. This implies~$|A|_i \ge 1$ for~$0 \le i \le h$ and~$|A|_i = 0$ for~$i > h$. Then, the \emph{profile of~$A$} is the eventually null sequence of natural numbers~$\prof (A,f) = (|A|_i)_{i \in \N}$ counting the states of each height in~$A$, in order of height starting from the limit cycles (height~$0$). For brevity, we often write a profile as a finite sequence~$\prof (A,f) = (|A|_0, |A|_1, \ldots, |A|_h)$ by omitting the null terms, except for the profile~$(0)$ of the empty dynamical system.
\end{definition} 

Taking the profile of a dynamical system allows us to work at a higher level of abstraction, since it corresponds to taking a whole equivalence class of dynamical systems. In the rest of this paper, we will denote profiles with bold letters and its elements in italics, such as~$\profile{p} = (p_i)_{i \in \N}$.

\section{The semiring of profiles}
\label{sec:semiring}

It is easy to define a sum operation~$+$ over the (countably infinite) set of profiles~$\Profiles$ that is compatible with the sum over~$\DynamicalSystems$: since~$(A,f) + (B,g)$ is the disjoint union, the profile of this sum is just the elementwise sum of~$\prof(A,f)$ and~$\prof(B,g)$.

\begin{definition}
Given two profiles~$\profile{p} = (p_i)_{i \in \N}$ and~$\profile{q} = (q_i)_{i \in \N}$, define their sum as~$\profile{p} + \profile{q} = (p_i + q_i)_{i \in \N}$.
\end{definition}

\begin{lemma}
\label{thm:profile-of-sum}
For each~$A, B \in \DynamicalSystems$ we have~$\prof(A+B) = \prof A + \prof B$. \qed
\end{lemma}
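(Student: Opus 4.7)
The plan is simply to unwind the definitions and observe that heights are preserved by disjoint union. Writing $(A+B,f+g) = (A \uplus B, f+g)$, the first step is to note that $A$ is invariant under $f+g$ (since $(f+g)$ restricted to $A$ equals $f$, whose image lies in $A$), and symmetrically for $B$. A straightforward induction on $n$ then gives that $(f+g)^n$ restricted to $A$ coincides with $f^n$, and that $(f+g)^n$ restricted to $B$ coincides with $g^n$.

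From this it follows that a state $x \in A$ is periodic in $(A+B, f+g)$ if and only if it is periodic in $(A,f)$, and the smallest $h$ such that $(f+g)^h(x)$ is periodic equals the smallest $h$ such that $f^h(x)$ is periodic; that is, $h_{A+B}(x) = h_A(x)$ for every $x \in A$, and symmetrically $h_{A+B}(x) = h_B(x)$ for every $x \in B$. Consequently, for each $i \in \N$ the states of height $i$ in $A+B$ partition into the states of height $i$ in $A$ and the states of height $i$ in $B$, yielding $|A+B|_i = |A|_i + |B|_i$.

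By the definition of the sum of profiles, this elementwise identity is exactly $\prof(A+B) = \prof A + \prof B$. There is no genuine obstacle in the argument: the lemma essentially records the fact that the disjoint union of dynamical systems does not create any interaction between the two components, so heights — and hence their counts at each level — add componentwise.
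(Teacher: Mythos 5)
Your argument is correct and is exactly the reasoning the paper leaves implicit: the lemma is stated without proof there, justified only by the preceding remark that a disjoint union adds heights elementwise, which is precisely what you verify in detail via the invariance of each component under $f+g$.
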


It is less immediate to define a product over~$\Profiles$, but it is indeed possible with a little more work. First, we show that, in order to compute the height of a state in a product, it suffices to take the maximum height of the two terms.

\begin{lemma}
\label{thm:max-height}
Let~$(A,f), (B,g) \in \DynamicalSystems$ and let~$(C,t) = (A,f) \times (B,g)$. Then, for each~$(a,b) \in C$, we have~$h(a,b) = \max(h(a),h(b))$.
\end{lemma}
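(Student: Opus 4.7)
The plan is to unfold the definition of height and reduce the claim to a statement about periodic states in the product. By definition, $h(a,b)$ is the least $n$ such that $t^n(a,b)$ is periodic in $(C,t)$. An easy induction on $n$ shows $t^n(a,b) = (f^n(a), g^n(b))$, so the question becomes: for which $n$ is the pair $(f^n(a), g^n(b))$ periodic?

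The key lemma I would isolate is the following observation about periodicity in a tensor product: a state $(x,y) \in C$ is periodic for $t$ if and only if $x$ is periodic for $f$ and $y$ is periodic for $g$. One direction is immediate, since $t^k(x,y) = (x,y)$ forces $f^k(x) = x$ and $g^k(y) = y$. For the converse, if $f^{k_1}(x) = x$ and $g^{k_2}(y) = y$, then taking $k = \operatorname{lcm}(k_1,k_2)$ gives $t^k(x,y) = (x,y)$.

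Combining these two observations, $t^n(a,b)$ is periodic iff $f^n(a)$ is periodic and $g^n(b)$ is periodic. Now I would note the monotonicity property built into the definition of height: $f^n(a)$ is periodic iff $n \ge h(a)$, since once the orbit of $a$ has reached the cycle it stays there. Applying the same to $b$, the condition ``$f^n(a)$ periodic and $g^n(b)$ periodic'' is equivalent to $n \ge h(a)$ and $n \ge h(b)$, i.e.\ $n \ge \max(h(a),h(b))$. The minimum such $n$ is exactly $\max(h(a),h(b))$, which is $h(a,b)$.

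There is no serious obstacle here; the proof is a straightforward unfolding of definitions. The only mildly non-trivial step is the equivalence between periodicity of a pair in the product and joint periodicity of its coordinates, and even that is standard (it is essentially the fact that the periodic states of a tensor product form the tensor product of the periodic parts, with cycle lengths combining via lcm, as already used by the authors when discussing limit cycles of products).
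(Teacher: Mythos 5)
Your proof is correct and follows essentially the same route as the paper's: both arguments hinge on the fact that a state of the product is periodic if and only if both of its components are, combined with the observation that $f^n(a)$ is periodic exactly when $n \ge h(a)$. If anything, your version is slightly more complete, since you justify the componentwise characterisation via the lcm of the two periods (the paper asserts it without proof) and you avoid the paper's case split on whether $h(a) \ge h(b)$.
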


\begin{proof}
Notice that the limit cycles of~$C$ consist exactly of the states~$(x,y)$ such that~$y$ is a periodic state of~$A$, and~$y$ a periodic state of~$B$.

Suppose that~$h(a) \ge h(b)$. Then~$f^{h(a)}(a)$ is a periodic state of~$A$, and this is not the case for~$f^i(a)$ whenever~$i < h(a)$, by definition of height. Furthermore, $g^{h(a)}(b)$~is a periodic state of~$B$, since this is the case for all~$g^i(b)$ with~$i \ge h(b)$. Then, $t^{h(a)}(a,b) = (f^{h(a)}(a),g^{h(a)}(b))$ is a periodic state of~$C$, and this is not the case for~$t^i(a,b)$ whenever~$i < h(a)$. This means that~$h(a,b) = h(a)$.

Analogously, if~$h(a) \le h(b)$ we obtain~$h(a,b) = h(b)$, and the statement of the lemma follows. \qed
\end{proof}

The following lemma allows us to count the states of height~$k$ in a product as a function of the number of states of height~$k$ and at most~$k$ of the two factors.

\begin{lemma}
\label{thm:count-points}
Let~$A, B \in \DynamicalSystems$ be dynamical systems and let~$C = A \times B$. Then
\begin{align}
  \label{eq:points-by-height}
  |C|_k = |A|_k \times |B|_{\le k} + |B|_k \times |A|_{\le k} - |A|_k \times |B|_k
\end{align}
for each height~$k$, where~$|D|_{\le k} = |D|_0 + |D|_1 + \cdots + |D|_k$ for each~$D \in \DynamicalSystems$.
\end{lemma}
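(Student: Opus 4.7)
The plan is to reduce the statement directly to Lemma~\ref{thm:max-height} together with an elementary inclusion--exclusion argument. First I would observe that, by Lemma~\ref{thm:max-height}, a state $(a,b) \in C$ has height $k$ if and only if $\max(h(a), h(b)) = k$. This condition is equivalent to the disjunction ``$h(a) = k$ and $h(b) \le k$'' \emph{or} ``$h(b) = k$ and $h(a) \le k$'', so the set of height-$k$ states of $C$ decomposes as the union of two (non-disjoint) sets of pairs.

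Next I would apply inclusion--exclusion to count this union. The first set, $\{(a,b) : h(a) = k \text{ and } h(b) \le k\}$, is a Cartesian product of size $|A|_k \cdot |B|_{\le k}$; the second, $\{(a,b) : h(b) = k \text{ and } h(a) \le k\}$, has size $|B|_k \cdot |A|_{\le k}$; and their intersection, $\{(a,b) : h(a) = h(b) = k\}$, has size $|A|_k \cdot |B|_k$. Adding the first two and subtracting the third yields exactly the claimed formula~\eqref{eq:points-by-height}.

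There is no real obstacle here: once Lemma~\ref{thm:max-height} is invoked to replace the dynamical condition ``$(a,b)$ has height $k$ in $C$'' by the purely combinatorial condition ``$\max(h(a),h(b)) = k$'', the argument is a standard two-set inclusion--exclusion on finite sets of pairs. The only point to double-check is that the quantities $|A|_k$, $|B|_k$, $|A|_{\le k}$, $|B|_{\le k}$ are all finite, which is immediate since $A$ and $B$ are finite dynamical systems.
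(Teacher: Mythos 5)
Your proposal is correct and follows essentially the same route as the paper: invoke Lemma~\ref{thm:max-height} to translate ``$h(a,b)=k$'' into ``$\max(h(a),h(b))=k$'', then count via two-set inclusion--exclusion, subtracting the doubly counted pairs with $h(a)=h(b)=k$. No gaps.
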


\begin{proof}
Let~$a \in A$ with~$h(a)=k$ and~$b \in B$ with~$h(b) \le k$. Then, by Lemma~\ref{thm:max-height}, we have~$h(a,b)=k$. This corresponds to the term~$|A|_k \times |B|_{\le k}$ of the sum. Analogously, we have~$h(a,b)=k$ if~$h(a) \le k$ and~$h(b) = k$, which corresponds to the term~$|B|_k \times |A|_{\le k}$. This way, we have counted twice the states~$(a,b)$ with~$h(a)=k$ and~$h(b)=k$, thus it is necessary to subtract the term~$|A|_k \times |B|_k$, and that gives us the correct result. Notice that~\eqref{eq:points-by-height} works even if~$|A|_k = 0$ or~$|B|_k = 0$ (i.e., if~$k > h(A)$ or~$k > h(B)$), giving the expected~$|C|_k = |B|_k \times |A|_{\le k}$ and~$|C|_k = |A|_k \times |B|_{\le k}$, or even~$|C|_k = 0$ if~$k > \max(h(A),h(B))$. \qed
\end{proof}

Notice how Lemma~\ref{thm:count-points} does \emph{not} depend on the exact shapes of~$A$ and~$B$, but only on their profile. This allows us to define the product of profiles as follows~(Fig.~\ref{fig:products}):

\begin{figure}[t]
\centering
\renewcommand{\arraystretch}{2}
\newcolumntype{C}{>{\centering\arraybackslash}m{3.5em}}
\begin{tabular}{C|*{8}{C}}
  $\times$ & (0) & (1) & (2) & (1,1) & (3) & (1,2) & (2,1) & (1,1,1) \\
  \hline
  (0) & (0) & (0) & (0) & (0) & (0) & (0) & (0) & (0) \\
  (1) & (0) & (1) & (2) & (1,1) & (3) & (1,2) & (2,1) & (1,1,1) \\
  (2) & (0) & (2) & (4) & (2,2) & (6) & (2,4) & (4,2) & (2,2,2) \\
  (1,1) & (0) & (1,1) & (2,2) & (1,3) & (3,3) & (1,5) & (2,4) & (1,3,2) \\
  (3) & (0) & (3) & (6) & (3,3) & (9) & (3,6) & (6,3) & (3,3,3) \\
  (1,2) & (0) & (1,2) & (2,4) & (1,5) & (3,6) & (1,8) & (2,7) & (1,5,3) \\
  (2,1) & (0) & (2,1) & (4,2) & (2,4) & (6,3) & (2,7) & (4,5) & (2,4,3) \\
  (1,1,1) & (0) & (1,1,1) & (2,2,2) & (1,3,2) & (3,3,3) & (1,5,3) & (2,4,3) & (1,3,5)
\end{tabular}
\caption{Multiplication table for profiles of size~$0$, $1$, $2$, and~$3$.}
\label{fig:products}
\end{figure}

\begin{definition}
\label{def:product}
For two profiles~$\profile{p} = (p_i)_{i \in \N}$ and~$\profile{q} = (q_i)_{i \in \N}$, let their product be
\begin{align*}
  \textstyle
  \profile{p} \times \profile{q} = \Big(
    p_i \times \sum_{j=0}^i q_i + q_i \times \sum_{j=0}^i p_i - p_i \times q_i
  \Big)_{i \in \N}.
\end{align*}
\end{definition}

From Definition~\ref{def:product} and Lemma~\ref{thm:count-points} we obtain the expected result:

\begin{lemma}
\label{thm:profile-of-product}
For each~$A, B \in \DynamicalSystems$ we have~$\prof(A \times B) = \prof A \times \prof B$. \qed
\end{lemma}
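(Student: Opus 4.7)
The proof is essentially a direct verification that the entry-wise formula of Definition~\ref{def:product} matches the entry-wise count established by Lemma~\ref{thm:count-points}, so I would organise it as a matching-of-definitions argument rather than a genuinely new computation.

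The plan is to fix $A, B \in \DynamicalSystems$, set $C = A \times B$, and compare $\prof C$ with $\prof A \times \prof B$ coordinate by coordinate. For the left-hand side, the $i$-th entry of $\prof C$ is by definition $|C|_i$, and Lemma~\ref{thm:count-points} gives
\[
  |C|_i = |A|_i \times |B|_{\le i} + |B|_i \times |A|_{\le i} - |A|_i \times |B|_i,
\]
where $|A|_{\le i} = |A|_0 + \cdots + |A|_i$ and similarly for $B$.

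For the right-hand side, I would write $\profile{p} = \prof A$ and $\profile{q} = \prof B$, so that $p_j = |A|_j$ and $q_j = |B|_j$ for every $j \in \N$. Then the $i$-th entry of $\profile{p} \times \profile{q}$, as given by Definition~\ref{def:product}, is
\[
  p_i \times \sum_{j=0}^{i} q_j + q_i \times \sum_{j=0}^{i} p_j - p_i \times q_i,
\]
and the two finite sums are precisely $|B|_{\le i}$ and $|A|_{\le i}$ by definition. Substituting back, the expression coincides with the formula for $|C|_i$ above, so the $i$-th entries of the two profiles agree. Since $i$ was arbitrary, this gives $\prof(A \times B) = \prof A \times \prof B$.

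I do not expect any real obstacle here: the content of the lemma has already been absorbed into Lemma~\ref{thm:count-points}, and Definition~\ref{def:product} was engineered to mirror that formula. The only care needed is bookkeeping — in particular, keeping track of the cumulative sums $|{\cdot}|_{\le i}$ and noting that the formula remains correct past the heights of $A$ and $B$, where the relevant terms simply vanish (as already observed in the proof of Lemma~\ref{thm:count-points}), so the equality extends to all $i \in \N$ including the eventually null tail.
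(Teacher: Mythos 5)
Your proof is correct and takes exactly the route the paper intends: the lemma is stated with an immediate \qed precisely because Definition~\ref{def:product} was built to reproduce the formula of Lemma~\ref{thm:count-points}, and your coordinate-wise matching just spells out that identification (silently fixing the paper's typo $\sum_{j=0}^i q_i$ to the intended $\sum_{j=0}^i q_j$). Nothing further is needed.
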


This finally gives us the algebraic structure of~$\Profiles$.

\begin{theorem}
$(\Profiles, +, \times)$ is a commutative semiring.
\end{theorem}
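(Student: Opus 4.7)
The plan is to lift the semiring structure from $\DynamicalSystems$ to $\Profiles$ along the map $\prof$, which by construction is a surjection onto $\Profiles$ and, by Lemmas~\ref{thm:profile-of-sum} and~\ref{thm:profile-of-product}, is compatible with both operations: $\prof(A+B) = \prof A + \prof B$ and $\prof(A \times B) = \prof A \times \prof B$. This means $\prof$ is essentially a semiring epimorphism in disguise, and every identity satisfied in $\DynamicalSystems$ descends to one in $\Profiles$.

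First, I would record that $\Profiles$ is closed under the operations of Definition~\ref{def:product} and under elementwise sum: given $\profile{p}, \profile{q} \in \Profiles$, pick any representatives $A, B \in \DynamicalSystems$ with $\prof A = \profile{p}$ and $\prof B = \profile{q}$; then $\profile{p} + \profile{q} = \prof(A+B)$ and $\profile{p} \times \profile{q} = \prof(A \times B)$, both of which lie in $\Profiles$ by definition. Next, I would verify each semiring axiom by the same representative-chasing argument. For instance, commutativity of addition follows from
\[
  \profile{p} + \profile{q} = \prof(A + B) = \prof(B + A) = \profile{q} + \profile{p},
\]
where the middle equality uses commutativity of~$+$ in $\DynamicalSystems$. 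Associativity of $+$ and of~$\times$, commutativity of~$\times$, and both distributivities are obtained verbatim in this style, each time picking representatives, invoking the known identity in~$\DynamicalSystems$, and collapsing the result with the two compatibility lemmas.

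Finally, I would identify the neutral elements. The empty system $(\emptyset, \emptyset) \in \DynamicalSystems$ has profile $\mathbf{0} = (0)$ and serves as the additive identity, while the one-state fixed point has profile $\mathbf{1} = (1)$ and serves as the multiplicative identity; the axioms $\profile{p} + \mathbf{0} = \profile{p}$, $\profile{p} \times \mathbf{1} = \profile{p}$, and the absorbing law $\profile{p} \times \mathbf{0} = \mathbf{0}$ all transfer from the corresponding facts in $\DynamicalSystems$ by the same argument.

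I do not expect any real obstacle: the only thing to be careful about is to always invoke the compatibility lemmas to move between the abstract, purely sequence-level definitions of~$+$ and~$\times$ on $\Profiles$ and the concrete operations on $\DynamicalSystems$. Alternatively, one could prove each axiom directly from Definition~\ref{def:product} and the elementwise sum, but this would amount to a routine (and considerably less enlightening) manipulation of the formula $p_i \sum_{j \le i} q_j + q_i \sum_{j \le i} p_j - p_i q_i$; the lifting argument via $\prof$ avoids all such calculations.
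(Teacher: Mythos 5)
Your proposal is correct and follows essentially the same strategy as the paper: the paper likewise establishes the multiplicative and distributive axioms by choosing representatives in $\DynamicalSystems$ and transferring the identities through Lemma~\ref{thm:profile-of-product}, with the same neutral elements $(0)$ and $(1)$. The only cosmetic difference is that the paper handles the additive axioms directly (they are inherited elementwise from $\N$) rather than lifting them via $\prof$, but both routes are valid.
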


\begin{proof}
The operation~$+$ inherits the associative and commutative properties from~$\N$ and has, as the neutral element, the null profile~$0_\Profiles = (0)$, i.e., the profile of the empty dynamical system~$0_\DynamicalSystems = \varnothing$. Thus~$(\Profiles,+)$ is a commutative monoid.

Let~$\profile{p}, \profile{q}, \profile{r} \in \Profiles$, and let~$A,B,C \in \DynamicalSystems$ such that~$\prof A = \profile{p}, \prof B = \profile{q}$, and~$\prof C = \profile{r}$. Then we have~$\profile{p} \times \profile{q} = \prof A \times \prof B = \prof(A \times B)$ by Lemma~\ref{thm:profile-of-product}, then~$\prof(A \times B) = \prof(B \times A)$ by commutativity of~$\times$ in~$\DynamicalSystems$, and~$\prof(B \times A) = \prof B \times \prof A = \profile{q} \times \profile{p}$, and thus~$\times$ is commutative. Similarly, we have the associative property~$\profile{p} \times (\profile{q} \times \profile{r}) = (\profile{p} \times \profile{q}) \times \profile{r}$, the neutral element~$1_\Profiles = \prof 1_\DynamicalSystems = (1)$, i.e., the profile of a the dynamical system consisting of a single fixed point, and the distributive property~$\profile{p} \times (\profile{q} + \profile{r}) = \profile{p} \times \profile{q} + \profile{p} \times \profile{r}$. \qed
\end{proof}

From Lemmata~\ref{thm:profile-of-sum} and~\ref{thm:profile-of-product} we also obtain that ``taking the profile'' does indeed respect the semiring operations and, being surjective, it gives us, in a sense, a good abstraction of dynamical systems.

\begin{corollary}
The function~$\prof \colon \DynamicalSystems \to \Profiles$ is a surjective semiring homomorphism. \qed
\end{corollary}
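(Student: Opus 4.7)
The plan is to dispatch the two parts of the statement separately. The homomorphism condition is essentially immediate from what has already been established: Lemmata~\ref{thm:profile-of-sum} and~\ref{thm:profile-of-product} supply $\prof(A+B) = \prof A + \prof B$ and $\prof(A \times B) = \prof A \times \prof B$ for all $A, B \in \DynamicalSystems$, while the preceding theorem's proof already records that $\prof(\varnothing) = (0) = 0_\Profiles$ and that a single fixed point has profile $(1) = 1_\Profiles$. So the only genuine content left to verify is surjectivity.

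For surjectivity, I would exhibit an explicit preimage for each profile. The degenerate case $\profile{p} = (0)$ is handled by the empty system, so assume $\profile{p} = (p_0, p_1, \ldots, p_h)$ with $p_i \ge 1$ for $0 \le i \le h$, which is guaranteed by the profile definition. I build $(A,f)$ layer by layer: first take $p_0$ fixed points to form the periodic part, and then for each $k = 1, \ldots, h$, introduce $p_k$ fresh states and send each of them via~$f$ to an arbitrarily chosen state of height $k-1$, which exists precisely because $p_{k-1} \ge 1$.

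Checking that this $(A,f)$ has profile~$\profile{p}$ is a short induction on~$k$: a state introduced at layer~$k$ reaches a fixed point after exactly~$k$ applications of~$f$ and not sooner, so it has height exactly~$k$, and every state of~$A$ is introduced at exactly one such layer. The only point that actually needs any thought is the observation that the lower bound $p_{k-1} \ge 1$ baked into the definition of profile is precisely the hypothesis that allows the inductive step of the construction to fire; without it (e.g.\ for a sequence like $(1,0,1)$) surjectivity would fail, and such sequences are correctly excluded from~$\Profiles$. No part of the argument appears to present any real obstacle, so I expect the whole proof to fit in a handful of lines.
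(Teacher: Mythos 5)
Your proof is correct and follows the same route the paper intends: the homomorphism property is exactly Lemmata~\ref{thm:profile-of-sum} and~\ref{thm:profile-of-product} together with the identities $\prof(0_\DynamicalSystems) = 0_\Profiles$ and $\prof(1_\DynamicalSystems) = 1_\Profiles$, and the paper treats surjectivity as immediate since $\Profiles$ is by definition the set of profiles \emph{of dynamical systems}, i.e.\ the image of $\prof$. Your explicit layer-by-layer construction of a preimage is a harmless (and slightly more informative) elaboration, correctly using the constraint $p_i \ge 1$ for $0 \le i \le h$ built into the definition of a profile.
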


A very important and useful result is that~$\Profiles$ contains an isomorphic copy of the naturals, the initial semiring of its category:

\begin{lemma}
\label{thm:profiles-contains-nats}
$(\Profiles,+,\times)$ has a subsemiring isomorphic to~$(\N,+,\times)$.
\end{lemma}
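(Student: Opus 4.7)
The plan is to identify $\N$ with the subset $S = \set{(n) : n \in \N} \subseteq \Profiles$ of profiles supported only at height~$0$, i.e., profiles of dynamical systems composed entirely of periodic states (for instance, of~$n$ disjoint fixed points). I would then exhibit the isomorphism explicitly as the map $\phi \colon \N \to \Profiles$ defined by $\phi(n) = (n)$, and check that it is an injective semiring homomorphism whose image is exactly~$S$.

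First I would verify compatibility with the two operations. Additivity is immediate from the elementwise definition of sum: $(m) + (n) = (m+n)$. For the product, applying Definition~\ref{def:product} to $\profile{p} = (m)$ and $\profile{q} = (n)$, every component of index~$i \ge 1$ vanishes because $p_i = q_i = 0$, while the $0$-th component reduces to $mn + nm - mn = mn$; hence $(m) \times (n) = (mn)$. This simultaneously shows that $\phi$ preserves both operations and that~$S$ is closed under them.

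Next, the neutral elements: $\phi(0) = (0) = 0_\Profiles$ and $\phi(1) = (1) = 1_\Profiles$, as already identified in the preceding theorem. Together with the preservation of~$+$ and~$\times$, this makes~$\phi$ a semiring homomorphism. Injectivity is trivial, since distinct naturals give distinct sequences, and $\phi$ is surjective onto~$S$ by construction; therefore $\phi \colon \N \to S$ is a semiring isomorphism.

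There is no real obstacle here: the crux is the short calculation that the seemingly intricate product formula from Definition~\ref{def:product} collapses to ordinary multiplication on profiles concentrated at height~$0$. This is also the intuitively expected outcome, since a disjoint union of~$m$ and~$n$ fixed points yields~$m+n$ fixed points, and their tensor product yields~$mn$ fixed points.
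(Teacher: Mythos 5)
Your proposal is correct and follows essentially the same route as the paper: both define $\phi(n) = (n)$, check that it is an injective semiring homomorphism preserving $0_\Profiles$, $1_\Profiles$, $+$ and $\times$, and take its image as the subsemiring. Your explicit verification that the product formula of Definition~\ref{def:product} collapses to $(m)\times(n)=(mn)$ is a detail the paper leaves implicit, and it is done correctly.
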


\begin{proof}
Let~$\phi \colon \N \to \DynamicalSystems$ be defined by~$\phi(n) = (n)$, that is, the profile having~$n$ as its first component and zero everywhere else. Then~$\phi(0) = (0) = 0_\Profiles$, $\phi(1) = (1) = 1_\Profiles$, $\phi(m+n) = (m+n) = (m) + (n) = \phi(m) + \phi(n)$, and~$\phi(m \times n) = (m \times n) = (m) \times (n) = \phi(m) \times \phi(n)$. Thus~$\phi$ is a semiring homomorphism. Furthermore, $\phi(m) = \phi(n)$~implies~$m = n$, i.e., $\phi$~is injective. As a consequence, its image~$\phi(\N)$ is a subsemiring of~$\Profiles$ isomorphic to~$\N$. \qed
\end{proof}

The size of a profile is the number of states of any dynamical system with that profile, and it also enjoys some nice properties.

\begin{definition}
The \emph{size} of a profile~$\profile{p} = (p_i)_{i \in \N}$ is given by~$|\profile{p}| = \sum_{i \in \N} p_i$.
\end{definition}

\begin{lemma}
\label{thm:size-is-hom}
The function~$|\cdot| \colon \Profiles \to \N$ is a semiring homomorphism.
\end{lemma}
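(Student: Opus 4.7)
The plan is to verify the four homomorphism conditions: preservation of $0$, preservation of $1$, additivity, and multiplicativity. The first three are essentially immediate from the definitions. We have $|0_\Profiles| = |(0)| = 0$ and $|1_\Profiles| = |(1)| = 1$, and for additivity, $|\profile{p} + \profile{q}| = \sum_i (p_i + q_i) = \sum_i p_i + \sum_i q_i = |\profile{p}| + |\profile{q}|$, using that every profile is eventually null so the sums are finite.

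The substantive condition is multiplicativity, and here I would mimic the strategy already used in the proof that $(\Profiles,+,\times)$ is a commutative semiring, namely lifting the identity to $\DynamicalSystems$ via the surjectivity of $\prof$. Given profiles $\profile{p},\profile{q} \in \Profiles$, pick dynamical systems $A, B \in \DynamicalSystems$ with $\prof A = \profile{p}$ and $\prof B = \profile{q}$. Notice that for any dynamical system $D$, the number $|D|$ of states equals $\sum_i |D|_i$, i.e.\ $|D| = |\prof D|$, so sizes in $\DynamicalSystems$ and in $\Profiles$ agree under $\prof$. Then, using Lemma~\ref{thm:profile-of-product} and the fact that $|A \times B| = |A| \times |B|$ for the Cartesian product underlying $\times$ in $\DynamicalSystems$, we compute
\begin{align*}
  |\profile{p} \times \profile{q}|
  = |\prof A \times \prof B|
  = |\prof(A \times B)|
  = |A \times B|
  = |A| \times |B|
  = |\profile{p}| \times |\profile{q}|.
\end{align*}

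I expect no real obstacle here; the only subtlety is justifying the step $|\prof D| = |D|$, which amounts to observing that the heights partition $D$ into classes of sizes $|D|_0, |D|_1, \ldots$, so summing these recovers $|D|$. A reader who prefers a purely combinatorial check can instead expand Definition~\ref{def:product} directly and show
\begin{align*}
  \textstyle
  \sum_i \bigl( p_i \sum_{j \le i} q_j + q_i \sum_{j \le i} p_j - p_i q_i \bigr)
  = \sum_{i,j} p_i q_j,
\end{align*}
by splitting the double sum into the regions $j<i$, $j>i$, $j=i$; the first two terms cover $\{j \le i\}$ and $\{i \le j\}$ respectively, and subtracting $p_i q_i$ removes the doubly-counted diagonal. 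I would mention this as an alternative but present the lift through $\prof$ as the main argument, since it is shorter and reuses the abstraction machinery already established.
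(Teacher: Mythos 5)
Your proof is correct and follows essentially the same route as the paper: both arguments reduce multiplicativity to the identity $|A \times B| = |A| \times |B|$ for the underlying Cartesian product in $\DynamicalSystems$, lifted through $\prof$ via Lemma~\ref{thm:profile-of-product} (and similarly for addition, which you also verify directly). The purely combinatorial expansion you sketch as an alternative is also correct, but not needed.
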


\begin{proof}
The size~$|\profile{p}|$ of a profile~$\profile{p}$ is just the number of states~$|A|$ of any dynamical system~$A$ such that~$\prof A = \profile{p}$. Since the sum and product of dynamical systems have the disjoint union and the Cartesian product as their set of states, respectively, we have~$|A+B| = |A| + |B|$ and $|A \times B| = |A| \times |B|$ for~$A,B \in \DynamicalSystems$. Furthermore, we have~$|0_\Profiles| = 0$ and~$|1_\Profiles| = 1$. The result then follows from Lemmata~\ref{thm:profile-of-sum} and~\ref{thm:profile-of-product}. \qed
\end{proof}

Since the profiles~$\Profiles$ contain the naturals (Lemma~\ref{thm:profiles-contains-nats}), they are not only a semiring, but also an~$\N$-semimodule, a ``vector space'' with the naturals as its ``scalars''~\cite{Hebisch1998a}, with the semimodule axioms satisfied as a direct consequence of the semiring axioms. This will be useful later when analysing the complexity of solving linear equations over~$\Profiles$ (Section~\ref{sec:equations}).

\begin{theorem}
\label{thm:profiles-semimodule}
$(\Profiles, +)$ is an $\N$-semimodule with its ordinary multiplication restricted to~$\N \times \Profiles \to \Profiles$, that is, $n \times (p_i)_{i \in \N} = (np_i)_{i \in \N}$. \qed
\end{theorem}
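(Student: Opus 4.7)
The plan is to obtain the semimodule structure essentially for free, by inheriting it from the semiring structure of~$\Profiles$ via the natural embedding~$\phi \colon \N \to \Profiles$ given in Lemma~\ref{thm:profiles-contains-nats}. Recall that a commutative monoid~$(M,+)$ is an $\N$-semimodule when equipped with a scalar multiplication~$\N \times M \to M$ satisfying the distributivity laws on both sides, associativity~$(mn)x = m(nx)$, the unit law~$1 \cdot x = x$, and the absorption law~$0 \cdot x = 0_M$. Since~$(\Profiles,+)$ is already a commutative monoid (we know~$\Profiles$ is a semiring), the only substantive task is to define the scalar action and check the axioms.

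First I would define the scalar multiplication as the restriction of the semiring product of~$\Profiles$ to~$\phi(\N) \times \Profiles$, identifying a natural number~$n$ with the profile~$(n)$. Then the key computational step is to verify the explicit formula in the statement, namely that~$(n) \times (p_i)_{i \in \N} = (n p_i)_{i \in \N}$. This follows by plugging into Definition~\ref{def:product}: for the profile~$\profile{n} = (n, 0, 0, \ldots)$, the partial sums satisfy~$\sum_{j=0}^i n_j = n$ for every~$i$ while~$n_i = 0$ for~$i > 0$, so for~$i = 0$ the three terms give~$np_0 + p_0 n - np_0 = np_0$, and for~$i > 0$ they give~$0 + n p_i - 0 = np_i$. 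This single observation is the only place where the asymmetric-looking product formula needs to be touched.

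Once this formula is established, the semimodule axioms are immediate. Distributivity on the right, $n \times (\profile{p} + \profile{q}) = n \times \profile{p} + n \times \profile{q}$, and on the left, $(m+n) \times \profile{p} = m \times \profile{p} + n \times \profile{p}$, both reduce to distributivity of ordinary multiplication in~$\N$ applied componentwise to the profiles. Associativity~$(mn) \times \profile{p} = m \times (n \times \profile{p})$ similarly reduces to associativity of multiplication in~$\N$, while~$1 \times \profile{p} = \profile{p}$ and~$0 \times \profile{p} = 0_\Profiles$ are immediate from the explicit formula. Equivalently, one may simply observe that these are specialisations of the semiring axioms of~$\Profiles$ combined with the fact that~$\phi$ is a semiring homomorphism (Lemma~\ref{thm:profiles-contains-nats}), so no independent verification is truly required.

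There is no real obstacle here; the only point that might deserve a sentence of care is the componentwise formula for scalar multiplication, since the product of~$\Profiles$ is not itself componentwise. But as shown above, the ``asymmetric'' terms collapse as soon as one of the factors is supported on height~$0$, which is exactly the case for elements of~$\phi(\N)$.
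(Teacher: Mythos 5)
Your proof is correct and follows essentially the same route as the paper, which states the result without an explicit proof precisely because the semimodule axioms are inherited from the semiring structure of~$\Profiles$ together with the embedding~$\phi\colon \N \to \Profiles$ of Lemma~\ref{thm:profiles-contains-nats}. Your explicit verification that the product formula of Definition~\ref{def:product} collapses to the componentwise formula~$(n)\times(p_i)_{i\in\N} = (np_i)_{i\in\N}$ when one factor is supported at height~$0$ is a welcome addition that the paper merely asserts.
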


Unfortunately, profiles are not particularly nice as a semimodule, since there is only one minimal generating set, and it is not linearly independent.

\begin{theorem}
\label{thm:profiles-generators}
$\Profiles$ as an~$\N$-semimodule has a unique, countably infinite minimal generating set~$G = \set{\profile{p} \in \Profiles : p_0 = 1}$, the set of profiles starting with~$1$, which is linearly dependent.
\end{theorem}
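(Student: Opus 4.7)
The plan is to prove four separate facts: (a) $G$ generates $\Profiles$ as an $\N$-semimodule; (b) $G$ is minimal; (c) every minimal generating set equals $G$; and (d) $G$ admits two distinct $\N$-linear combinations with the same value (the standard notion of linear dependence in semimodules from~\cite{Hebisch1998a}). The unifying tool throughout is the first component: $p_0 = 1$ precisely when $\profile{p} \in G$, $p_0 \ge 1$ for every nonzero profile, and $p_0 = 0$ only for the zero profile~$(0)$.

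For~(a), given any nonzero profile $\profile{q} = (q_0, q_1, \ldots, q_h)$ with $q_0 \ge 1$, I would decompose $\profile{q} = (1, q_1, \ldots, q_h) + (q_0 - 1) \cdot (1)$, since both summands lie in $G$; the zero profile $(0)$ is the empty sum. For~(b), if some $\profile{p} \in G$ were expressible as $\sum n_k \profile{g}_k$ with $\profile{g}_k \in G \setminus \set{\profile{p}}$, then reading off the first component gives $1 = \sum n_k \cdot 1 = \sum n_k$, so exactly one $n_j$ equals $1$ and $\profile{p} = \profile{g}_j$, contradicting $\profile{g}_j \ne \profile{p}$.

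For~(c), I would show $G \subseteq H$ for any generating set~$H$. Fixing $\profile{p} \in G$ and writing $\profile{p} = \sum n_k \profile{h}_k$ with $\profile{h}_k \in H$, the first-component equation becomes $1 = \sum n_k (\profile{h}_k)_0$. Since $(\profile{h}_k)_0 \ge 1$ whenever $\profile{h}_k \ne (0)$, exactly one index $j$ satisfies $n_j = (\profile{h}_j)_0 = 1$, and any remaining nonzero coefficients necessarily multiply the zero profile~$(0)$. Hence $\profile{p} = \profile{h}_j \in G \cap H$, so $G \subseteq H$; if $H$ is additionally minimal, then $H = G$, as otherwise $G$ would be a proper generating subset of~$H$.

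Finally, for~(d), I would exhibit the identity $2 \cdot (1,1) = (2,2) = (1,2) + (1)$, displaying two distinct $\N$-combinations of distinct generators in $G$ with the same value. Countable infinity is immediate since $G$ contains $(1, n)$ for every $n \in \N$. The main delicacy lies in step~(c): one must remember that a generating set $H$ might well contain the zero profile $(0)$ with arbitrary coefficient, but the first-component argument neutralises this anomaly cleanly, since such terms contribute nothing to the first component.
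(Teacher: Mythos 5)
Your proposal is correct and follows essentially the same route as the paper: the same decomposition $(q_0-1)\cdot(1) + (1,q_1,\ldots,q_h)$ for generation, the same first-component argument (relying on the fact that a profile with first entry $0$ is the zero profile) to show every generating set contains $G$, and a concrete dependence relation (yours is $2\cdot(1,1)=(1,2)+(1)$ where the paper uses $(1,1)+(1,2)=(1)+(1,3)$, both equally valid).
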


\begin{proof}
The set~$G$ is a generating set, since any profile~$\profile{q} = (q_0, q_1, q_2, \ldots)$ can be written as~$(q_0-1)\times(1) + 1 \times (1, q_1, q_2, \ldots)$, that is, any element of~$\Profiles$ is a linear combination of at most two elements of~$G$. This set is countably infinite.

To prove that any generating set of~$\Profiles$ must contain~$G$, consider any element~$\profile{p} \in G$. If it is a linear combination~$\profile{p} = \sum_{i=1}^m a_i \profile{q}_i$ of profiles~$\profile{q}_i \in \Profiles$ with coefficients~$a_i \in \N$, then one of the~$\profile{q}_i$ must start with~$1$ (i.e., $q_{i,0} = 1$) and its coefficient~$a_i$ must be~$1$ as well, otherwise we would have either~$p_0 = 0$ or~$p_0 > 1$; furthermore, we must have~$(a_j \profile{q}_j)_0 = 0$ for all~$j \ne i$, which implies~$a_j \profile{q}_j = 0$ (since all elements of a profile are null after the first~$0$) and thus~$a_j = 0$. But then we have~$\profile{p} = 1 \profile{q}_i$, that is, $\profile{q}_i = \profile{p}$: any linear combination giving~$\profile{p}$ as its result must contain~$\profile{p}$ itself, and thus it must belong to any generating set.

However, $G$ is not linearly independent, that is, there exist~$m$ profiles~$\profile{p}_i \in \Profiles$ and corresponding natural numbers~$a_i,b_i \in \N$ such that~$\sum_{i=1}^m a_i \profile{p}_i = \sum_{i=1}^m b_i \profile{p}_i$ but~$a_i \ne b_i$ for some~$i$. For instance, we have~$(1, 1) + (1, 2) = (1) + (1, 3)$.
\qed
\end{proof}

\section{Reducibility and factorisation of profiles}
\label{sec:reducibility}

When studying the reducibility of profiles with respect to semiring product, a simple sufficient condition for irreducibility is given by the primality of its size.

\begin{lemma}
\label{thm:prime-size-irreducible}
Let~$\profile{p} \in \Profiles$ be a profile such that~$|\profile{p}|$ is prime. Then~$\profile{p}$ is irreducible.
\end{lemma}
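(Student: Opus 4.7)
The strategy is to push the question through the size homomorphism $|\cdot| \colon \Profiles \to \N$ of Lemma~\ref{thm:size-is-hom}, turning an irreducibility question in $\Profiles$ into the elementary fact that a prime natural admits only trivial factorisations in $\N$.

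The first step is to assume a factorisation $\profile{p} = \profile{q} \times \profile{r}$ with $\profile{q}, \profile{r} \in \Profiles$, and apply Lemma~\ref{thm:size-is-hom} to obtain $|\profile{p}| = |\profile{q}| \times |\profile{r}|$ in $\N$. Primality of $|\profile{p}|$ then forces one of the factor sizes, say $|\profile{q}|$, to equal $1$.

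The second step, which is the only genuine content beyond applying the homomorphism, is to identify the profiles of size~$1$. By the definition of profile, any non-empty profile $\profile{q}$ must satisfy $q_0 \ge 1$; if in addition $\sum_i q_i = |\profile{q}| = 1$, then $q_0 = 1$ and $q_i = 0$ for all $i \ge 1$, so $\profile{q} = (1) = 1_{\Profiles}$. Substituting back gives $\profile{r} = \profile{p}$, so the assumed factorisation is the trivial one. Since $|\profile{p}|$ prime also guarantees $|\profile{p}| \ge 2$, the profile $\profile{p}$ is itself neither the zero $(0)$ nor the unit $(1)$, so it qualifies as irreducible.

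I do not expect any real obstacle here: the argument is a one-line appeal to the size homomorphism together with the trivial verification that $(1)$ is the unique size-$1$ profile. The only thing worth being pedantic about is the convention on irreducibility (excluding units and zero), which is handled by the observation that a prime size is at least~$2$.
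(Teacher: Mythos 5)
Your proof is correct and follows exactly the paper's own argument: apply the size homomorphism of Lemma~\ref{thm:size-is-hom}, use primality of $|\profile{p}|$ to force one factor to have size~$1$, and observe that $(1)=1_\Profiles$ is the unique profile of size~$1$. The extra remark that a prime size rules out $\profile{p}$ being a unit or zero is a harmless refinement the paper leaves implicit.
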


\begin{proof}
Suppose~$\profile{p} = \profile{q} \times \profile{r}$. Since~$|\cdot|$ is a semiring homomorphism (Lemma~\ref{thm:size-is-hom}), we have~$|\profile{p}| = |\profile{q}| \times |\profile{r}|$. But~$|\profile{p}|$ is prime, thus either~$|\profile{q}| = 1$, or~$|\profile{r}| = 1$. Since the only profile of size~$1$ is~$1_\Profiles = (1)$, this factorisation is trivial. \qed
\end{proof}

It is easy to check by inspection of the product table (Fig.~\ref{fig:products}) that some profiles admit multiple factorisations into irreducibles, a property that they share with the semiring of dynamical systems~\cite{Dennunzio2018b}.

\begin{theorem}
$\Profiles$ is \emph{not} a unique factorisation semiring.
\end{theorem}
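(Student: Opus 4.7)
The plan is to exhibit a single profile admitting two essentially distinct factorisations into irreducibles; this is the standard way to refute unique factorisation, and the multiplication table of Fig.~\ref{fig:products} already furnishes candidates.

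Concretely, I would point out the coincidence
\begin{equation*}
  (2) \times (1,2) \;=\; (2,4) \;=\; (1,1) \times (2,1),
\end{equation*}
which can be read off the table or checked directly from Definition~\ref{def:product}. The next step is to certify that each of the four factors~$(2),(1,1),(1,2),(2,1)$ is irreducible. This is where Lemma~\ref{thm:prime-size-irreducible} does all the work for me: the profiles~$(2)$ and~$(1,1)$ have size~$2$, while~$(1,2)$ and~$(2,1)$ have size~$3$, and both~$2$ and~$3$ are prime, so no non-trivial factorisation can exist.

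Finally I would observe that the two factorisations are genuinely different as multisets of irreducibles, since~$\set{(2),(1,2)} \ne \set{(1,1),(2,1)}$ as multisets of profiles (indeed, the four profiles are pairwise distinct). Hence~$(2,4)$ witnesses the failure of unique factorisation in~$\Profiles$.

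I do not foresee any real obstacle: the only conceivable subtlety is verifying the two products by hand using Definition~\ref{def:product} (rather than trusting the table), but each computation reduces to a handful of arithmetic operations. The proof is essentially a one-line exhibition once Lemma~\ref{thm:prime-size-irreducible} is invoked.
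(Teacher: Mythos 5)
Your proof is correct and is essentially identical to the paper's: the same counterexample $(2,4) = (2) \times (1,2) = (1,1) \times (2,1)$, with irreducibility of all four factors certified by Lemma~\ref{thm:prime-size-irreducible} via their prime sizes. Nothing to add.
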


\begin{proof}
The smallest counterexample is the profile~$(2,4)$, which has two distinct factorisations:~$(2,4) = (2) \times (1,2) = (1,1) \times (2,1)$. All the factors~$(2)$, $(1,2)$, $(1,1)$, and~$(2,1)$ are irreducible because of their prime size (Lemma~\ref{thm:prime-size-irreducible}). \qed
\end{proof}

Another property in common with dynamical systems~\cite{Dorigatti2018a} is that most profiles, a fraction asymptotically equal to~$1$, are irreducible.

\begin{theorem}
\label{thm:majority-irreducible}
The majority of profiles is irreducible; specifically,
\begin{align*}
  \lim_{n \to \infty} \frac{\text{number of reducible profiles of size at most~$n$}}{\text{number of profiles of size at most~$n$}} = 0.
\end{align*}
\end{theorem}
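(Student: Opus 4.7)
The plan is to compare the number of profiles of size at most $n$, which I will show grows exactly like $2^n$, with the number of \emph{reducible} profiles of size at most $n$, which I will show is only $O(n \cdot 2^{n/2})$, and conclude that the ratio tends to $0$.

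For the total count, I would first observe that a profile of size $s \ge 1$ is, by definition, a tuple $(p_0, \ldots, p_h)$ of positive integers summing to $s$, i.e., a composition of $s$. The number of compositions of $s$ is the classical $2^{s-1}$. Including the empty profile $(0)$ of size $0$, the total number of profiles of size at most $n$ is
\[
  1 + \sum_{s=1}^{n} 2^{s-1} = 2^n.
\]

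For the reducible count, let $R(s)$ denote the number of reducible profiles of size exactly $s$. Every such profile factors as $\profile{q} \times \profile{r}$ with $|\profile{q}|, |\profile{r}| \ge 2$, and by Lemma~\ref{thm:size-is-hom} the sizes must satisfy $|\profile{q}| \cdot |\profile{r}| = s$. Counting every ordered factorisation of this type (which only inflates the upper bound, since distinct factorisations may give the same profile), there are at most $2^{d-1}\cdot 2^{s/d - 1}$ pairs for each divisor $d \mid s$ with $2 \le d \le s/2$, so
\[
  R(s) \le \sum_{\substack{d \mid s \\ 2 \le d \le s/2}} 2^{d + s/d - 2}.
\]
The function $d \mapsto d + s/d$ is convex on the positive reals, hence maximal over the interval $[2, s/2]$ at the endpoints, giving $d + s/d \le s/2 + 2$; using the trivial bound $d(s) \le s$ on the number of divisors then yields $R(s) = O(s \cdot 2^{s/2})$.

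Finally, summing a series dominated by its last term up to a constant (geometric ratio $\sqrt{2}$), I get $\sum_{s=1}^{n} R(s) = O(n \cdot 2^{n/2})$, so the ratio in the statement is $O(n/2^{n/2})$, which tends to $0$. The argument is essentially a counting comparison; the only step that merits a careful sentence is the maximum of $d + s/d$ over proper divisors, which I expect to be the main (and still modest) obstacle, handled by convexity.
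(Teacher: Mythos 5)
Your proposal is correct and follows essentially the same route as the paper's proof: count profiles via compositions to get $2^n$ total, bound the reducible ones by summing $2^{d+s/d-2}$ over nontrivial divisors, and use $d+s/d\le s/2+2$ together with the trivial divisor bound to obtain $O(n\,2^{n/2})$. The only cosmetic difference is that you justify the key inequality by convexity of $d\mapsto d+s/d$ on $[2,s/2]$, whereas the paper notes it can be proved by induction; both are fine.
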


\begin{proof}
There are as many profiles of size~$i$ as there are ordered tuples of strictly positive naturals having sum~$i$, which correspond to the ways of writing~$i$ as an ordered sum of strictly positive integers. The latter are the \emph{compositions} of~$i$, and there are~$2^{i-1}$ of them for~$i \ge 1$~\cite[Chapter~4]{Andrews1984a}, and~$1$ for~$i=0$. Hence, the number of profiles of size at most~$n$ is given by~$1 + \sum_{i=1}^n 2^{i-1} = 1 + 2^n - 1 = 2^n$.

Suppose that~$\profile{p} \in \Profiles$ has size~$i = |\profile{p}|$ and that~$i = \ell m$. Then, there are \emph{at most}~$2^{\ell-1} \times 2^{m-1} = 2^{\ell+m-2}$ ways of choosing profiles~$\profile{q}$ and~$\profile{r}$, of sizes~$\ell$ and~$m$ respectively, such that~$\profile{p} = \profile{q} \times \profile{r}$.

Let~$k$ be the number of distinct factorisations of~$i$ into products of two integers~$\ell_j \ge m_j > 1$. The number of ways of decomposing~$\profile{p}$ into a product of two non-trivial divisors is at most~$\sum_{j=1}^k 2^{\ell_j+m_j-2}$. Observe\footnote{This can be proved by induction on any of the two variables.} that~$\ell+m \le \ell m / 2 + 2$ for all~$\ell,m>1$; this implies~$\sum_{j=1}^k 2^{\ell_j+m_j-2} \le \sum_{j=1}^k 2^{\ell_j m_j / 2} = \sum_{j=1}^k 2^{i/2} = k \sqrt{2^i}$. We have~$k < i$, since the number of non-trivial divisors of~$i$ is strictly less than~$i$ (at least~$1$ and~$i$ have to be thrown out), hence the number of ways of obtaining a profile of size~$i$ as a product of two non-trivial profiles, which is the same as the number of reducible profiles of size~$i$, is bounded by~$i \sqrt{2^i}$ for~$k \ge 1$, and it is~$1$ for~$i=0$. The number of reducible profiles of size at most~$n$ is then bounded by~$1 + \sum_{i=1}^n i \sqrt{2^i} \le 1 + n \sum_{i=0}^n \sqrt{2^i} = 1 + n \frac{\sqrt{2^{n+1}} - 1 }{\sqrt{2} - 1}$.

By dividing that by the number of profiles of size~$n$ we obtain
\begin{align*}
  \lim_{n \to \infty}
  \frac{1 + n \frac{\sqrt{2^{n+1}} - 1 }{\sqrt{2} - 1}}{2^n} = 0 
\end{align*}
as required. \qed
\end{proof}

Thus, the semiring of profiles is quite complex from the point of view of reducibility: most profiles are not reducible at all, but those that are sometimes admit multiple factorisations. Furthermore, since height-$1$ profiles behave as the natural numbers, we also obtain a complexity lower bound to profile factorisation.

\begin{theorem}
The problem of profile factorisation, that is, given a profile~$\profile{p} \in \Profiles$, finding a divisor~$\profile{d}$ of~$\profile{p}$ with~$\profile{d} \ne 1_\Profiles$ and~$\profile{d} \ne \profile{p}$ (or answering that~$\profile{p}$ is irreducible, if this is the case) is at least as hard as integer factorisation.
\end{theorem}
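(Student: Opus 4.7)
The plan is to give a polynomial-time reduction from integer factorisation to profile factorisation, exploiting the embedding $\phi \colon \N \to \Profiles$ with $\phi(n) = (n)$ from Lemma~\ref{thm:profiles-contains-nats}. Given an integer $n > 1$ whose non-trivial divisors we seek, the reduction simply outputs the profile $(n)$ and calls the hypothetical profile-factorisation algorithm on it; from the returned divisor $\profile{d}$ we read off its (necessarily unique non-zero) first component to recover an integer divisor of $n$. The encoding is clearly computable in time polynomial in $\log n$.

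Correctness reduces to showing that non-trivial factorisations of $(n)$ in $\Profiles$ correspond bijectively to non-trivial factorisations of $n$ in $\N$ via $\phi$. One direction is immediate: if $n = ab$ with $1 < a, b < n$, then $(a) \times (b) = \phi(a) \times \phi(b) = \phi(ab) = (n)$ is a non-trivial factorisation in $\Profiles$, since neither $(a)$ nor $(b)$ equals $1_\Profiles = (1)$ or $(n)$.

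The converse is the main content of the proof and requires showing that any factorisation $(n) = \profile{q} \times \profile{r}$ forces both factors to have height $0$. Neither factor can equal the empty profile $(0)$, else the product would have size $0 \ne n$, so $q_0, r_0 \ge 1$, because every non-empty dynamical system contains a limit cycle. Suppose towards contradiction that $\profile{q}$ has some positive height $h > 0$, so $q_h \ge 1$. Rewriting the product formula from Definition~\ref{def:product} at index $h$ via $|\profile{r}|_{\le h} - r_h = |\profile{r}|_{<h}$, one obtains $(\profile{q} \times \profile{r})_h = q_h \cdot |\profile{r}|_{<h} + r_h \cdot |\profile{q}|_{\le h}$, and since $h > 0$ gives $|\profile{r}|_{<h} \ge r_0 \ge 1$, the $h$-th component of the product is strictly positive. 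But the $h$-th component of $(n)$ is $0$, a contradiction. Hence $\profile{q} = (a)$ and $\profile{r} = (b)$ for some $a, b \in \N$, and by the size homomorphism (Lemma~\ref{thm:size-is-hom}) we have $ab = n$; non-triviality in $\Profiles$ exactly yields $1 < a, b < n$.

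No step is especially subtle: the main obstacle, if it can be called one, is the short algebraic manipulation of the product formula at index $h$. This manipulation is precisely what shows height-$0$ profiles to be closed under divisibility, thereby allowing $\phi$ to transport factorisations in both directions and completing the reduction.
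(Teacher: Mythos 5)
Your proof is correct and follows essentially the same route as the paper: reduce integer factorisation to profile factorisation via $n \mapsto (n)$, observe that any non-trivial divisor of $(n)$ must itself have height~$0$, and conclude that height-$0$ products behave exactly as integer products. The only difference is that you establish the height-$0$ closure directly from the product formula of Definition~\ref{def:product}, whereas the paper cites Lemma~\ref{thm:max-height}; both justifications are valid and the reductions are otherwise identical.
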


\begin{proof}
Given a natural number~$n$, let us consider the profile~$\profile{n} = (n)$, that is, $n$ followed by zeros. This profile can only be divided by profiles~$\profile{d} = (d)$ and~$\profile{q} = (q)$ of height~$0$ by Lemma~\ref{thm:max-height}. Then, $\profile{n} = \profile{d} \times \profile{q} = (d \times q + d \times q - d \times q) = (d \times q)$ for some profile~$\profile{d}$ with~$\profile{d} \ne 1_\Profiles$ and~$\profile{d} \ne \profile{n}$ if and only if~$n = d \times q$ for some~$d$ with~$d \ne 1$ and~$d \ne n$, which is the integer factorisation problem. \qed
\end{proof}

\section{Solving polynomial equations over profiles}
\label{sec:equations}

One of the reasons for introducing profiles is to abstract away from the exact shape of dynamical systems, with the hope of making polynomial equations easier to solve. As we will show in this section, this is not at all the case.
First of all, let us prove that polynomial equations with \emph{natural} coefficients do sometimes have \emph{non-natural} solutions in~$\Profiles$ (e.g.,~$3X = Z$ has solution~$X = (1,2), Z = (3,6)$), but only if there also exist natural ones (e.g., $X=3, Z=9$), as in the semiring~$\DynamicalSystems$~\cite{Dennunzio2018b}.

\begin{lemma}
Let~$p,q \in \N[\vec{X}]$ be polynomials with natural coefficients over the variables~$\vec{X} = (X_1, \ldots, X_m)$. Then the equation~$p(\vec{X}) = q(\vec{X})$ has a solution in~$\Profiles$ if and only if it has a solution in~$\N$.
\end{lemma}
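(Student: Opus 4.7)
The plan is to exploit the size homomorphism $|\cdot| \colon \Profiles \to \N$ established in Lemma~\ref{thm:size-is-hom}, which effectively collapses a profile solution down to a natural one. The backward direction is trivial: by Lemma~\ref{thm:profiles-contains-nats}, the map $\phi \colon \N \to \Profiles$ with $\phi(n) = (n)$ is an injective semiring homomorphism, so any solution $\vec{n} \in \N^m$ of $p(\vec{X}) = q(\vec{X})$ is mapped to the solution $\phi(\vec{n}) \in \Profiles^m$ of the same equation (interpreted in $\Profiles$).

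For the forward direction, I would start from a hypothetical profile solution $\vec{\profile{p}} = (\profile{p}_1, \ldots, \profile{p}_m) \in \Profiles^m$ satisfying $p(\vec{\profile{p}}) = q(\vec{\profile{p}})$ in $\Profiles$. Here the natural coefficients of $p$ and $q$ are interpreted inside $\Profiles$ via the embedding $\phi$, so that a coefficient $c \in \N$ becomes the profile $(c)$ whose size equals $c$. The key observation is that, because $|\cdot|$ is a semiring homomorphism and $|\phi(c)| = c$ for every $c \in \N$, applying $|\cdot|$ to a polynomial expression in $\Profiles$ commutes with the evaluation of the same polynomial in $\N$. Formally, for any polynomial $r \in \N[\vec{X}]$ one has $|r(\vec{\profile{p}})| = r(|\profile{p}_1|, \ldots, |\profile{p}_m|)$, which can be verified by induction on the construction of $r$ from constants, variables, sums and products.

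Applying this to both sides of the assumed equation $p(\vec{\profile{p}}) = q(\vec{\profile{p}})$ and using that $|\cdot|$ is well defined as a function, I obtain $p(|\profile{p}_1|, \ldots, |\profile{p}_m|) = q(|\profile{p}_1|, \ldots, |\profile{p}_m|)$, which is a solution in $\N$. Since there is essentially no obstacle here beyond checking that polynomial evaluation commutes with the homomorphism, the proof reduces to invoking Lemma~\ref{thm:size-is-hom} and the identity $|\phi(c)| = c$; the only point worth being explicit about is this compatibility between the two interpretations of the natural coefficients, since the statement mixes an equation with coefficients in $\N$ with a semantics in $\Profiles$.
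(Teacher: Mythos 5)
Your proof is correct and follows essentially the same route as the paper: the backward direction via the embedding of $\N$ into $\Profiles$, and the forward direction by applying the size homomorphism of Lemma~\ref{thm:size-is-hom} to both sides and noting that it fixes the natural coefficients. The paper merely writes out the polynomial expansion explicitly in multi-index notation where you invoke induction on the structure of the polynomial; the argument is the same.
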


\begin{proof}
If the equation has a solution in~$\N$, then this is already a solution in~$\Profiles$. Conversely, let~$\vec{\profile{r}} = (\profile{r}_1, \ldots, \profile{r}_m)$ be a solution in~$\Profiles$. We claim that~$|\vec{\profile{r}}| = (|\profile{r}_1|, \ldots, |\profile{r}_m|)$ is also a solution, in~$\N$; that is, by replacing each profile by (the dynamical system corresponding to) its size, the equation remains valid.

If the equation~$p(\vec{X}) = q(\vec{X})$ is of degree at most~$n$, then it can be written as
\begin{math}
  \sum_{\vec{i} \in [0,n]^m} \big(
    a_{\vec{i}} \prod_{j=1}^m X_j^{i_j}
  \big) =
  \sum_{\vec{i} \in [0,n]^m} \big(
    b_{\vec{i}} \prod_{j=1}^m X_j^{i_j}
  \big),
\end{math}
that is, we compute all products of the~$m$ variables, each variable with an exponent ranging from~$0$ to~$n$ (these exponents are collected in a vector~$\vec{i} \in [0,n]^m$), and multiply it by a corresponding coefficient~$a_{\vec{i}} \in \N$ or~$b_{\vec{i}} \in \N$, and then all these monomials are added together. Any of the coefficients~$a_{\vec{i}}$ and~$b_{\vec{i}}$ can be~$0$ (if there is more than one variable, some of them will surely be, in order to keep the degree at most~$n$).

If~$\vec{\profile{r}}$ is a solution the equation, i.e., if~$p(\vec{\profile{r}}) = q(\vec{\profile{r}})$, then by expanding we obtain
\begin{math}
  \sum_{\vec{i} \in [0,n]^m} \big(
    a_{\vec{i}} \prod_{j=1}^m \profile{r}_j^{i_j}
  \big) =
  \sum_{\vec{i} \in [0,n]^m} \big(
    b_{\vec{i}} \prod_{j=1}^m \profile{r}_j^{i_j}
  \big).
\end{math}
By applying the size function~$|\cdot|$ to both sides of the equation, and exploiting the fact that it is a semiring homomorphism (Lemma~\ref{thm:size-is-hom}) and that~$|a_{\vec{i}}| = a_{\vec{i}}$ and~$|b_{\vec{i}}| = b_{\vec{i}}$ since they already are natural numbers, we obtain the equation over the naturals
\begin{math}
  \sum_{\vec{i} \in [0,n]^m} \big(
    a_{\vec{i}} \prod_{j=1}^m |\profile{r}_j|^{i_j}
  \big) =
  \sum_{\vec{i} \in [0,n]^m} \big(
    b_{\vec{i}} \prod_{j=1}^m |\profile{r}_j|^{i_j}
  \big),
\end{math}
which is nothing else than~$p(|\vec{\profile{r}}|) = q(|\vec{\profile{r}}|)$. Thus~$|\vec{\profile{r}}|$ is indeed a natural solution to the original equation.
\qed
\end{proof}

As a consequence, ``Hilbert's 10th problem over~$\Profiles$'' has a negative answer: there is no algorithm for deciding if a polynomial equation in~$\Profiles[\vec{X}]$ is solvable, otherwise you could use the same algorithm for natural equations.

\begin{theorem}
Deciding whether an equation~$p(\vec{X}) = q(\vec{X})$ with~$p, q \in \Profiles[\vec{X}]$ has a solution in~$\Profiles$ is undecidable. \qed
\end{theorem}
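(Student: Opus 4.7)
The plan is a straightforward reduction from Hilbert's 10th problem over~$\N$, which Matiyasevich's theorem shows to be undecidable. Given an arbitrary Diophantine equation $p(\vec{X}) = q(\vec{X})$ with $p, q \in \N[\vec{X}]$, I would regard it as an equation in~$\Profiles[\vec{X}]$ by identifying each natural coefficient $a \in \N$ with the height-$0$ profile~$(a) \in \Profiles$, using the embedding $\phi\colon \N \hookrightarrow \Profiles$ provided by Lemma~\ref{thm:profiles-contains-nats}. This translation is trivially computable, as the syntactic presentation of the equation does not essentially change.

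By the lemma immediately preceding this theorem, the translated equation admits a solution in~$\Profiles$ if and only if the original admits one in~$\N$. Thus any decision procedure for the existence of solutions in~$\Profiles[\vec{X}]$ would immediately decide Hilbert's 10th problem over~$\N$, a contradiction. There is essentially no technical obstacle here: all the content lies in the preceding lemma, which in turn relies on the fact that the size function $|\cdot|$ is a semiring homomorphism (Lemma~\ref{thm:size-is-hom}) and that natural coefficients remain fixed under it. The only remaining check is that the reduction is effective, which is immediate since one simply reinterprets the integer~$a$ as the one-term sequence~$(a)$.
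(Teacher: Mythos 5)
Your proof is correct and is exactly the argument the paper intends: the theorem is stated with no separate proof precisely because it follows from the preceding lemma by reducing Hilbert's tenth problem over~$\N$ (undecidable by Matiyasevich's theorem) to solvability over~$\Profiles$ via the embedding of Lemma~\ref{thm:profiles-contains-nats}. No difference in approach.
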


We can get a subclass of algorithmically solvable equations by having one constant side, that, by considering equations of the form~$p(\vec{X}) = \profile{q}$ with~$\profile{q} \in \Profiles$. The constant side makes the search space of the solutions finite, which means that at least a brute-force search algorithm is available.

\begin{lemma}
\label{thm:operation-bound}
Let~$\profile{p}, \profile{q}, \profile{r} \in \Profiles$ be profiles. Then~$\profile{p} + \profile{q} = \profile{r}$ implies~$p_i \le r_i$, and~$\profile{p} \times \profile{q} = \profile{r}$ implies~$p_i \le r_i$ whenever~$\profile{q} \ne 0_\Profiles$, for all~$i \in \N$.
\end{lemma}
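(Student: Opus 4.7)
The plan is to dispatch the two parts separately. The additive claim is immediate from the definition of $+$ on $\Profiles$: writing $\profile{r} = \profile{p} + \profile{q} = (p_i + q_i)_{i \in \N}$ gives $r_i = p_i + q_i \ge p_i$, since all profile entries are natural numbers.

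For the multiplicative claim I would avoid working directly with the somewhat bulky formula of Definition~\ref{def:product} and instead lift the statement to $\DynamicalSystems$. Pick any $A, B \in \DynamicalSystems$ with $\prof A = \profile{p}$ and $\prof B = \profile{q}$; by Lemma~\ref{thm:profile-of-product}, $\prof(A \times B) = \profile{r}$. Because $\profile{q} \ne 0_\Profiles$, the system $B$ is non-empty, hence contains at least one periodic state $b_0$, that is, some $b_0 \in B$ with $h(b_0) = 0$. By Lemma~\ref{thm:max-height}, for every $a \in A$ of height $i$ we have $h(a, b_0) = \max(i, 0) = i$, so $a \mapsto (a, b_0)$ injects the height-$i$ states of $A$ into the height-$i$ states of $A \times B$, yielding $p_i \le r_i$.

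If an algebraic argument is preferred, one can instead rewrite Definition~\ref{def:product} as
\[
  r_i = p_i \sum_{j=0}^{i} q_j + q_i \sum_{j=0}^{i} p_j - p_i q_i
      = p_i \sum_{j=0}^{i-1} q_j + q_i \sum_{j=0}^{i} p_j,
\]
and then observe that $\profile{q} \ne 0_\Profiles$ forces $q_0 \ge 1$: for $i \ge 1$ the first summand is at least $p_i \cdot q_0 \ge p_i$ and the second is non-negative, while for $i = 0$ the expression reduces to $p_0 q_0 \ge p_0$.

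There is no real obstacle here; the only subtlety worth flagging is the necessity of the hypothesis $\profile{q} \ne 0_\Profiles$, which is what supplies the periodic state $b_0$ (equivalently, $q_0 \ge 1$), and without which the inequality fails trivially when $\profile{r} = (0)$ but $\profile{p} \ne (0)$.
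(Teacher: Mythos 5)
Both parts are correct, and your primary argument for the multiplicative half takes a genuinely different route from the paper. The paper works directly with the formula of Definition~\ref{def:product}: it observes that $\profile{q} \ne 0_\Profiles$ gives $q_j \ge 1$ for some $j \le i$, hence $p_i \sum_{j=0}^i q_j \ge p_i$, and then splits into the cases $q_i = 0$ and $q_i \ge 1$ to handle the subtracted term $p_i q_i$. Your main argument instead lifts the statement to $\DynamicalSystems$ via surjectivity of $\prof$ and Lemmata~\ref{thm:profile-of-product} and~\ref{thm:max-height}: the injection $a \mapsto (a, b_0)$ for a periodic $b_0 \in B$ is a clean structural explanation of \emph{why} the inequality holds (multiplying by a nonempty system embeds a height-preserving copy of $A$), at the cost of invoking the correspondence with dynamical systems rather than staying inside $\Profiles$. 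Your fallback algebraic argument is essentially the paper's proof, but slightly tidier: cancelling $p_i q_i$ first to get $r_i = p_i \sum_{j=0}^{i-1} q_j + q_i \sum_{j=0}^{i} p_j$ removes the need for the case split, using only the fact that $\profile{q} \ne 0_\Profiles$ forces $q_0 \ge 1$ (profiles have no gaps below their height). All steps check out, including the remark that the hypothesis $\profile{q} \ne 0_\Profiles$ is genuinely needed.
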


\begin{proof}
If~$\profile{p} + \profile{q} = \profile{r}$, then~$p_i \le p_i + q_i = r_i$ for all~$i \in \N$, as required. Now suppose~$\profile{p} \times \profile{q} = \profile{r}$ and~$\profile{q} \ne 0_\Profiles$. By Definition~\ref{def:product}, this means
\begin{align*}
  r_i = (\profile{p} \times \profile{q})_i =
    p_i \times \sum_{j=0}^i q_i + q_i \times \sum_{j=0}^i p_i - p_i \times q_i.
\end{align*}
Since~$\profile{q} \ne 0_\Profiles$, we have~$q_j \ge 1$ for at least one~$j \le i$. This means that~$p_i \times \sum_{j=0}^i q_i \ge p_i$. If~$q_i = 0$ then~$r_i \ge p_i$ as required. So suppose~$q_i \ge 1$; this implies~$q_i \times \sum_{j=0}^i p_i \ge q_i \times p_i$ and~$r_i \ge p_i + q_i \times p_i - p_i \times q_i = p_i$, which completes the proof. \qed
\end{proof}

By applying Lemma~\ref{thm:operation-bound} repeatedly, we obtain the following result.

\begin{lemma}
\label{thm:polynomial-bound}
Let~$p(\vec{X}) \in \Profiles[\vec{X}]$ over the variables~$\vec{X} = (X_1, \ldots, X_m)$, and let~$\profile{q} \in \Profiles$ be a constant. Then, if~$p(\vec{\profile{r}}) = \profile{q}$ for some~$\vec{\profile{r}} = (\profile{r}_1, \ldots, \profile{r}_m) \in \Profiles^m$, there exists a (possibly different) solution~$\vec{\profile{s}} = (\profile{s}_1, \ldots, \profile{s}_m) \in \Profiles^m$ such that~$p(\vec{\profile{s}}) = \profile{q}$ and~$s_{i,j} \le q_j$ for all~$1 \le i \le m$ and~$j \in \N$.
\end{lemma}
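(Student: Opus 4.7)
The plan is to expand $p(\vec{X})$ in standard form as $\sum_{\vec{i}} \profile{a}_{\vec{i}} \prod_{j=1}^m X_j^{i_j}$, evaluate at $\vec{\profile{r}}$ to obtain the identity $p(\vec{\profile{r}}) = \sum_{\vec{i}} M_{\vec{i}} = \profile{q}$ with $M_{\vec{i}} = \profile{a}_{\vec{i}} \prod_{j=1}^m \profile{r}_j^{i_j}$, and then apply Lemma~\ref{thm:operation-bound} componentwise---once to the outer sum, and once to peel a single factor off each monomial.

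First I would iterate the additive bound of Lemma~\ref{thm:operation-bound} over the sum to conclude that $M_{\vec{i}} \le \profile{q}$ componentwise for every multi-index $\vec{i}$. Then, for each variable $X_j$, I would distinguish two cases according to whether it appears in a \emph{nonzero} monomial: call $X_j$ \emph{active} if there exists some $\vec{i}$ with $i_j \ge 1$ and $M_{\vec{i}} \ne 0_\Profiles$, and \emph{inactive} otherwise.

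For an active $X_j$, I would pick a witnessing $M_{\vec{i}} \ne 0_\Profiles$ and write it as $\profile{r}_j \times N$, where $N$ collects all the remaining factors (the other $\profile{r}_k^{i_k}$'s and $\profile{a}_{\vec{i}}$). Since $M_{\vec{i}} \ne 0_\Profiles$ we also have $N \ne 0_\Profiles$, so the multiplicative bound of Lemma~\ref{thm:operation-bound} gives $\profile{r}_j \le M_{\vec{i}} \le \profile{q}$ componentwise, and I would set $\profile{s}_j := \profile{r}_j$. For an inactive $X_j$, I would set $\profile{s}_j := 0_\Profiles$, which satisfies the required bound trivially.

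The remaining step is to check that $p(\vec{\profile{s}}) = p(\vec{\profile{r}}) = \profile{q}$. Any originally nonzero monomial $M_{\vec{i}}$ only involves active variables at positive exponent, so its factors are unchanged under $\vec{\profile{s}}$ and its value is preserved; any originally zero monomial stays zero, because if every factor were nonzero under $\vec{\profile{s}}$ then the same would have held under $\vec{\profile{r}}$ (we only replaced some factors by $0_\Profiles$), and a product of nonzero profiles is nonzero by multiplicativity of size (Lemma~\ref{thm:size-is-hom}). The only mildly delicate point is precisely this treatment of the inactive variables: Lemma~\ref{thm:operation-bound} yields no direct bound on them, which is exactly why the statement permits a possibly different solution.
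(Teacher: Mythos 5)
Your proposal is correct and follows essentially the same route as the paper's proof: apply Lemma~\ref{thm:operation-bound} to the monomial expansion to bound every variable that occurs in a nonzero monomial, and replace the remaining variables by~$0_\Profiles$ without affecting the value of the polynomial. Your write-up is in fact slightly more careful than the paper's on the final verification step (that the zero monomials remain zero after the substitution), which the paper leaves implicit.
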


\begin{proof}
If all coefficients of~$p$ are nonzero, and all profiles~$\profile{r}_i$ are also nonzero, then let~$\vec{\profile{s}} = \vec{\profile{r}}$, and the result follows from Lemma~\ref{thm:operation-bound} by induction on the structure of the expression~$p(\vec{\profile{r}})$.

Otherwise, the expression~$p(\vec{\profile{r}})$ is a sum with at least one null term, say~$\profile{a} \times \profile{p}_{i_1}^{e_1} \times \cdots \times \profile{p}_{i_k}^{e_k}$. If any of the terms~$\profile{p}_i$ occurring in this product have~$p_{i,j} > q_j$ for some~$j \in \N$, then it means that~$\profile{p}_i$ never occurs in a non-null term of the expression~$p(\vec{\profile{r}})$, since all sums are computed elementwise and this would invalidate the equality. Hence~$\profile{p}_i$ only occurs multiplied by~$0_\Profiles$, and it can thus be replaced by \emph{any} profile~$\profile{p}_i'$ satisfying~$p'_{i,j} \le q_j$ (for instance, $\profile{p}_i' = 0_\Profiles$ always works) without changing the validity of the equation. By repeating this operation with all profiles~$\profile{p}_i$ of this kind, we obtain another solution~$\vec{\profile{s}}$ which satisfies the required inequalities. \qed
\end{proof}

In the rest of the paper we encode profiles, as is natural, as finite sequences of natural numbers~$\profile{p} = (p_0, \ldots, p_h)$ in binary notation. We can prove that polynomial equation with a constant side can be solved in nondeterministic polynomial time.

\begin{lemma}
\label{thm:constant-rhs-np}
Deciding whether an equation~$p(\vec{X}) = \profile{q}$, with~$p \in \Profiles[\vec{X}]$ and constant right-hand side~$\profile{q} \in \Profiles$, has a solution in~$\Profiles$ is an~$\NP$ problem. The same holds for a \emph{system} of equations.
\end{lemma}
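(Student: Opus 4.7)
The plan is to establish $\NP$ membership via a standard guess-and-verify strategy, leveraging Lemma~\ref{thm:polynomial-bound} as the structural heart of the argument.

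First I would observe that Lemma~\ref{thm:polynomial-bound} already constrains the search space dramatically: if a solution exists, there is one $\vec{\profile{s}} = (\profile{s}_1, \ldots, \profile{s}_m)$ with $s_{i,j} \le q_j$ for all $i, j$. Since $q_j = 0$ for $j$ beyond the height of $\profile{q}$, each $\profile{s}_i$ has height at most that of $\profile{q}$, and each of its entries fits in as many bits as the largest entry of $\profile{q}$. Hence the whole tuple $\vec{\profile{s}}$ admits a binary encoding of size polynomial in the input, so it can be guessed nondeterministically.

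The main obstacle I anticipate is showing that the verification $p(\vec{\profile{s}}) \stackrel{?}{=} \profile{q}$ can be carried out in polynomial time when the exponents of $p$ are themselves encoded in binary. My approach is to evaluate each power $\profile{s}_i^{e}$ by repeated squaring using $O(\log e)$ profile multiplications, so it only remains to bound the bit-size of all intermediate values. For this I would invoke Lemma~\ref{thm:operation-bound} inductively over the structure of the evaluation: since the sum-of-monomials expansion equals $\profile{q}$, every nonzero monomial $c_{\vec{i}} \prod_j \profile{s}_j^{i_j}$ is componentwise bounded by $\profile{q}$; each factor $\profile{s}_j^{i_j}$ of such a monomial is in turn componentwise bounded by $\profile{q}$ (since a nonzero $\profile{s}_j$ makes the complementary factor nonzero, so Lemma~\ref{thm:operation-bound} applies); and each intermediate squaring output $\profile{s}_j^{2^k}$ arises as a factor in the product decomposition $\profile{s}_j^{i_j} = \profile{s}_j^{2^k} \times \profile{s}_j^{i_j - 2^k}$, hence again is componentwise bounded by $\profile{q}$. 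The boundary case $\profile{s}_j = 0_\Profiles$ is easy to detect and short-circuits the monomial to $0_\Profiles$. Consequently every profile manipulated during evaluation has polynomial bit-size, and each profile sum or product runs in polynomial time.

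For the extension to a system $p_1(\vec{X}) = \profile{q}_1, \ldots, p_\ell(\vec{X}) = \profile{q}_\ell$, I would apply the bound of Lemma~\ref{thm:polynomial-bound} equation by equation: a component $\profile{s}_i$ that appears non-trivially in the expansion of some $p_k$ is bounded by $\profile{q}_k$, while components that only ever multiply $0_\Profiles$ in every equation may be set to $0_\Profiles$ without harm. So there is always a solution bounded componentwise by the profile $\profile{Q}$ whose $j$-th entry is $\max_k q_{k,j}$, which is still of polynomial size, and the verifier simply evaluates each $p_k$ separately by the procedure above and checks all $\ell$ equalities.
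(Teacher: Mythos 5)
Your proof is correct and follows essentially the same approach as the paper: guess a candidate solution whose components are bounded via Lemma~\ref{thm:polynomial-bound}, then verify by evaluation. You are in fact more careful than the paper's own proof on two points it glosses over, namely bounding the intermediate profiles during evaluation when exponents are encoded in binary (via repeated squaring and Lemma~\ref{thm:operation-bound}), and upgrading the single-equation bound of Lemma~\ref{thm:polynomial-bound} to a simultaneous bound for a whole system.
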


\begin{proof}
By Lemma~\ref{thm:polynomial-bound}, the equation has a solution if and only if it has a solution~$\vec{\profile{r}} = (\profile{r}_1, \ldots, \profile{r}_m)$ where each element of~$\profile{r}_1, \ldots, \profile{r}_m$ is bounded by an element of~$\profile{q}$. Thus, guessing a solution to the equation amounts to guessing, for each~$\profile{r}_i = (r_{i,0}, \ldots, r_{i,h})$, a natural number~$r_{i,j} \in [0,q_j]$ for each height~$0, \ldots, h(\profile{q})$. This can be performed in nondeterministic polynomial time. Then, the candidate solution can be verified in deterministic polynomial time by evaluating the~$p(\vec{X})$ in~$\vec{\profile{r}}$ and checking equality with~$\profile{q}$. This proves that the problem belongs to~$\NP$.

In the case of multiple equations, after guessing the solution we need to verify that \emph{all} equations are satisfied, which still takes polynomial time. \qed
\end{proof}

Unfortunately, the~$\NP$ upper bound is strict. We prove that first for systems of linear equations.

\begin{theorem}
\label{thm:linear-equations}
Deciding whether a \emph{system of linear equations}
\begin{align*}
  \underbrace{
    p_1(\vec{X}) = \profile{q}_1
    \qquad \cdots \qquad
    p_n(\vec{X}) = \profile{q}_n
  }
\end{align*}
with~$p_i \in \Profiles[\vec{X}]$ and constant right-hand sides~$\profile{q}_i \in \Profiles$ has a solution in~$\Profiles$ is~$\NP$-complete.
\end{theorem}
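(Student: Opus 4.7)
The~$\NP$ upper bound is immediate from Lemma~\ref{thm:constant-rhs-np}, whose argument already accommodates systems of equations. The remaining task is to match it with an~$\NP$-hardness lower bound.

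The plan is to reduce from the classical~$\NP$-complete problem of feasibility of a system of linear Diophantine equations over the naturals: given~$A \in \N^{n \times m}$ and~$\vec{b} \in \N^n$ with entries in binary, decide whether~$A\vec{x} = \vec{b}$ admits a solution~$\vec{x} \in \N^m$. Hardness of this problem is folklore and follows, for example, from \textsc{Subset Sum} by taking a single equation~$\sum_i a_i x_i = t$ together with auxiliary constraints~$x_i + y_i = 1$ that force each~$x_i \in \set{0,1}$.

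Given such an instance, I would replace every scalar~$a_{ij}$ and~$b_i$ by the height-$0$ profile~$(a_{ij})$, respectively~$(b_i)$, producing the system
\begin{align*}
  \sum_{j=1}^m (a_{ij}) \times X_j = (b_i) \qquad \text{for } i = 1, \ldots, n
\end{align*}
over~$\Profiles[\vec{X}]$, which is clearly computable in polynomial time. For the soundness direction I would first record the following small computation: unrolling Definition~\ref{def:product} for a coefficient of height~$0$ yields~$(a) \times \profile{p} = (a p_0, a p_1, a p_2, \ldots)$, because every cross-term involving~$a_k$ for~$k \ge 1$ vanishes. Any natural solution~$(x_1, \ldots, x_m)$ then lifts verbatim to the profile assignment~$X_j = (x_j)$, satisfying each equation componentwise.

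The completeness direction is where the structure of~$\Profiles$ cooperates particularly nicely: given any profile solution~$(\profile{r}_1, \ldots, \profile{r}_m)$, I would apply the size homomorphism of Lemma~\ref{thm:size-is-hom} to both sides of each equation, obtaining~$\sum_j a_{ij} |\profile{r}_j| = b_i$ for all~$i$, so that~$(|\profile{r}_1|, \ldots, |\profile{r}_m|)$ is a natural solution to the original system. I do not foresee any real obstacle beyond the routine verification of the product formula for height-$0$ profiles; the rest of the argument reduces to invoking an already-proved semiring homomorphism.
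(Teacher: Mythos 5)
Your proof is correct, but it takes a genuinely different route to $\NP$-hardness than the paper. The paper reduces directly from One-in-three 3SAT: each Boolean variable $x$ yields profile variables $X, X'$ with the constraint $X + X' = 1$, and each clause yields $L_1 + L_2 + L_3 = 1$; the constant right-hand side $1_\Profiles$ forces every variable to be $0_\Profiles$ or $(1)$, so solutions coincide exactly with the one-in-three satisfying assignments. You instead reduce from feasibility of linear Diophantine systems over $\N$ (itself $\NP$-hard via \textsc{Subset Sum} with the same $x_i + y_i = 1$ trick), embed the coefficients as height-$0$ profiles, and transfer solvability back and forth between $\N$ and $\Profiles$ using the embedding of Lemma~\ref{thm:profiles-contains-nats} in one direction and the size homomorphism of Lemma~\ref{thm:size-is-hom} in the other. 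Your completeness direction is in fact a special case of the paper's earlier (unnumbered) lemma stating that equations with natural coefficients are solvable in $\Profiles$ iff they are solvable in $\N$, so your argument has the merit of reusing that machinery rather than reproving a $0/1$-forcing property ad hoc; your computation $(a) \times \profile{p} = (a p_0, a p_1, \ldots)$ is correct and is just the scalar action of Theorem~\ref{thm:profiles-semimodule}. The only thing the paper's construction buys that yours does not is uniformity: its system has all coefficients and all right-hand sides equal to $1$, which is then exploited verbatim in the proof of Theorem~\ref{thm:one-equation-np-complete} to bundle the $n$ equations $p_i(\vec{X}) = 1$ into a single linear equation via the linearly independent profiles $\profile{e}_i$. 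Your reduction would also feed into that argument (the right-hand sides need not all be $1$ there), so nothing is lost, but it is worth noting the dependency.
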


\begin{proof}
We prove that the problem is $\NP$-hard by reduction from the~$\NP$-complete problem One-in-three 3SAT, the problem of deciding whether a Boolean formula~$\varphi$ in ternary conjunctive normal form has a satisfying assignment which makes only one literal per clause true~\cite{Schaefer1978a}.

For each logical variable~$x$ of~$\varphi$ we have a pair of variables~$X$ and~$X'$ and an equation~$X + X' = 1$. This equation forces exactly one between~$X$ and~$X'$ to~$1$, and the other to~$0$. We use~$X$ to represent~$x$ and~$X'$ to represent~$\neg x$.

For each clause~$(\ell_1 \lor \ell_2 \lor \ell_3)$ of three literals we have an equation~$L_1 + L_2 + L_3 = 1$, where~$L_i = X_i$ if~$\ell_i = x_i$, and~$L_i = X_i'$ if~$\ell_i = \neg x_i$. This forces exactly one variable corresponding to a literal of the clause to~$1$, and the other two to~$0$.

Then, the system of equations obtained from~$\varphi$ is linear and has constant right-hand sides; furthermore, it has a solution if and only if the formula has a satisfying assignment. The satisfying assignments and the solutions to the system of equations are actually the same, if we interpret~$0$ as false and~$1$ as true.
\qed
\end{proof}

By exploiting the~$\N$-semimodule structure of~$\Profiles$, we can combine several linear equations together, proving that even a single one is already~$\NP$-hard.

\begin{theorem}
\label{thm:one-equation-np-complete}
Deciding whether a \emph{single linear equation}~$p(\vec{X}) = \profile{q}$, with~$p \in \Profiles[\vec{X}]$ and constant right-hand side~$\profile{q} \in \Profiles$, has a solution in~$\Profiles
$ is~$\NP$-complete.
\end{theorem}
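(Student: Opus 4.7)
The $\NP$ upper bound follows from Lemma~\ref{thm:constant-rhs-np}. For $\NP$-hardness I plan to reduce from the system-of-equations problem of Theorem~\ref{thm:linear-equations}. Given an instance arising from the One-in-three 3SAT reduction, which consists of $n = V + C$ equations of the form $p_i(\vec X) = (1)$ (each $p_i$ being a sum of two or three variables), I order the equations so that the $V$ variable equations $X_k + X_k' = (1)$ come first and the $C$ clause equations afterwards, and combine them into the single linear equation
\[
  \sum_{i=1}^n \profile{a}_i\, p_i(\vec X) = \sum_{i=1}^n \profile{a}_i,
\]
where $\profile{a}_i = \underbrace{(1,1,\dots,1)}_{n-i+1\text{ ones}}$. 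The constant right-hand side works out to $(n, n-1, \dots, 1)$ and the whole construction is polynomial-time.

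The forward direction is immediate by linearity. For the converse, given a profile solution~$\vec{\profile{r}}$, I would argue by downward induction on the height, using the specialisation of Definition~\ref{def:product}
\[
  \bigl(\underbrace{(1,\dots,1)}_{m} \cdot Z\bigr)_k =
  \begin{cases}
    Z_0 + Z_1 + \dots + Z_{k-1} + (k+1)\,Z_k & \text{if } k < m, \\
    m\,Z_k & \text{if } k \ge m,
  \end{cases}
\]
so that the combined equation at each height becomes an explicit sum of non-negative integer contributions. At the top height $n-1$, only the first variable equation $X_1 + X_1' = (1)$ contributes, and the resulting positive-integer combination of components of $X_1 + X_1'$ equals $1$; together with profile validity this forces $X_1$ and $X_1'$ to be natural with $X_1 + X_1' = 1$. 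The same argument applied successively at heights $n-2, \dots, C$ pins down each further pair $X_k, X_k'$ as natural. Then at heights $C-1, C-2, \dots, 0$ the clause equations $L_1^c + L_2^c + L_3^c = 1$ fall out one by one (each contributing a single position-0 sum because the literal variables are already known to be natural), so the original system is satisfied.

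The main obstacle I expect is this downward induction forcing naturality, because the profile product genuinely mixes a variable's height-$k$ component with all its lower ones and the equations at different heights do not decouple. Ordering variable equations first is essential: each such equation's \emph{shadow} bounds the whole pair $X_k + X_k'$, which is exactly the right level to force naturality; naive orderings can admit non-natural spurious solutions. For example, on the two-equation toy system $\{X = (1),\ X + X' = (1)\}$, the combining $(1,1)\, X + (X + X') = (2,1)$ wrongly accepts $X = (0),\ X' = (2,1)$, while the correct combining $(1,1)(X + X') + X = (2,1)$ admits only the natural solution $X = (1),\ X' = (0)$. Turning the sketch above into a rigorous proof --- carefully verifying at each induction step that only the intended contributions survive, using both the shadow formula and Lemma~\ref{thm:polynomial-bound} --- will be the heart of the argument. \qed
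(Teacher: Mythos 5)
Your reduction is the same one the paper uses: the upper bound comes from Lemma~\ref{thm:constant-rhs-np}, and the hardness direction bundles the system $p_1(\vec{X})=1,\ldots,p_n(\vec{X})=1$ of Theorem~\ref{thm:linear-equations} into a single equation via the coefficients $(1),(1,1),\ldots,(1,\ldots,1)$. The difference is that the paper performs no ordering and no induction: it observes that the length-$n$ prefixes of $\profile{e}_1,\ldots,\profile{e}_n$ form a matrix of determinant~$1$, hence are linearly independent over~$\N$, and concludes at once that $\sum_i \profile{e}_i\, p_i(\vec{X}) = \sum_i \profile{e}_i$ forces $p_i(\vec{X})=1$ for every~$i$. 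Your worry is legitimate: linear independence over~$\N$ only speaks about combinations with \emph{natural} coefficients, whereas the values $p_i(\vec{\profile{r}})$ are a priori arbitrary profiles, and your toy example (the spurious solution $X=(0)$, $X'=(2,1)$ of $(1,1)X + (X+X') = (2,1)$) correctly shows that with a careless pairing of coefficients to equations the combined equation can gain solutions the system does not have. So you are attempting to supply a step the paper leaves implicit, and your ordering idea (longer coefficients on the variable equations) plus the shadow formula for $(1,\ldots,1)\times Z$, which I checked against Definition~\ref{def:product}, are sensible tools for it.

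However, your argument stops exactly where the work begins, and you say so yourself. The converse direction --- a profile solution of the combined equation yields a solution of the system --- is only sketched: the downward induction ``forcing naturality'' is the entire content of the hardness proof, and it is not carried out. Moreover, even the base case is not justified as stated: at height $n-1$ it is \emph{not} true that only the term $\profile{a}_1(X_1+X_1')$ contributes, since by your own shadow formula every other term contributes $(n-i+1)$ times the height-$(n-1)$ component of $p_i(\vec{\profile{r}})$, and Lemma~\ref{thm:polynomial-bound} only bounds the variables by the right-hand side $(n,n-1,\ldots,1)$, which itself has height $n-1$; so those contributions need not vanish a priori. Until the induction is written out and each step is shown to kill the unwanted contributions, the $\NP$-hardness direction is a plausible plan with a correctly identified obstacle, not a proof.
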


\begin{proof}
We prove this problem~$\NP$-complete by adapting the proof of Theorem~\ref{thm:linear-equations}, reducing the system of linear equations~$p_1(\vec{X}) = 1, \ldots, p_n(\vec{X}) = 1$ to a single linear equation. Remark that all coefficients of the polynomials~$p_i$, as well as the right-hand sides of the equations, are actually natural numbers in that proof.

As mentioned above (Theorem~\ref{thm:profiles-semimodule}), $\Profiles$ is an~$\N$-semimodule. Consider the elements~$\profile{e}_i = \underbrace{(1, \ldots, 1)}_{i \text{ times}} \in \Profiles$ for~$1 \le i \le n$; these elements are linearly independent over~$\N$, since the $n \times n$ matrix over~$\Reals$ having the length-$n$ prefixes of~$\profile{e}_1, \ldots, \profile{e}_n$ as columns has determinant~$1$, and if there exists no linear combination with nonzero real coefficients giving the null vector, certainly there does not exist one with natural coefficients. Now consider the following equation:
\begin{align*}
  \sum_{i=1}^n \profile{e}_i p_i(\vec{X}) = \sum_{i=1}^n \profile{e}_i.
\end{align*}
It is a linear equation with constant right-hand side in~$\Profiles$ and left-hand side in~$\Profiles[\vec{X}]$. By linear independence over~$\N$ of the elements~$\profile{e}_i$, this equality holds if and only if~$p_i(\vec{X}) = 1$ for all~$1 \le i \le n$, that is, the equation has exactly the same solutions as the original system of equations, and this completes the proof. \qed
\end{proof}



\section{Conclusions}
\label{sec:conclusions}

The quest for a suitable algebraic abstraction of dynamical systems where polynomial equations are tractable, such as a semiring~$R$ with a surjective homomorphism~$\DynamicalSystems \to R$ that does not erase too much information, is not over. However, we feel like the semiring~$\Profiles$ itself still deserves further investigation. Is the borderline between decidable and undecidable equation problems, for instance in terms of polynomial degree or number of variables, the same as for natural numbers? Are there interesting subclasses of equations that are solvable in polynomial time, and others decidable but strictly harder than~$\NP$? Is there a polynomial-time reducibility test? And, from a more algebraic perspective, what are the prime elements of~$\Profiles$? Do they exist at all?

\subsubsection{Acknowledgements}

Caroline Gaze-Maillot was funded by a research internship and Antonio E. Porreca by his salary of French public servant (both affiliated to Aix Marseille Université, Université de Toulon, CNRS, LIS, Marseille, France). This work is an extended version of Caroline Gaze-Maillot's research internship work~\cite{GazeMaillot2020a}. We would like to thank Luca Manzoni for several fruitful discussions about the subject of this paper, in particular on the generating sets of~$\Profiles$ as an~$\N$-semimodule (Theorem~\ref{thm:profiles-generators}), Florian Bridoux for having the good idea on how to reduce several linear equations to a single one (Theorem~\ref{thm:one-equation-np-complete}), and Ananda Ayu Permatasari for finding an error in the original proof of Theorem~\ref{thm:majority-irreducible}.

\bibliographystyle{splncs04}
\bibliography{Bibliography}

\end{document}